\documentclass[a4paper,onecolumn,11pt,accepted=2026-07-20]{quantumarticle}
\pdfoutput=1
\usepackage[utf8]{inputenc}
\usepackage[english]{babel}


\usepackage{amsfonts}
\usepackage{amsmath}
\usepackage{physics}
\usepackage{bm}
\usepackage{graphicx}
\usepackage{latexsym}
\usepackage{hyperref}
\usepackage[capitalize]{cleveref}
\interfootnotelinepenalty=10000


\usepackage{float}
\usepackage[T1]{fontenc}
\usepackage{color,graphicx}
\usepackage{array}
\usepackage{enumerate}
\usepackage{amsmath}
\usepackage{amssymb}
\usepackage{amsthm}
\usepackage{bbm} 
\usepackage{pgfplots}
\usepackage{pgf}
\usepackage{graphicx}
\usepackage{tikz}
\usetikzlibrary{patterns}
\usetikzlibrary{arrows.meta}
\usepgfplotslibrary{patchplots} 
\usetikzlibrary{pgfplots.patchplots} 
\pgfplotsset{width=9cm,compat=1.5.1}
\usepackage{diagbox}

\definecolor{myblue}{RGB}{0,0,128}
\definecolor{myblue2}{RGB}{0,32,96}
\definecolor{myblue3}{RGB}{0,64,64}
\definecolor{myblue4}{RGB}{0,96,32}
\definecolor{myblue5}{RGB}{0,128,0}
\definecolor{myblue6}{RGB}{32,96,0}
\definecolor{myblue7}{RGB}{64,64,0}
\definecolor{myblue8}{RGB}{96,32,0}
\definecolor{myblue9}{RGB}{128,0,0}

\usepackage{mathtools}
\usepackage{amsthm}

\newtheorem{theorem}{Theorem}

\newtheorem{lemma}[theorem]{Lemma}
\newtheorem{corollary}[theorem]{Corollary}
\newtheorem{proposition}[theorem]{Proposition}

\newtheorem{conjecture}[theorem]{Conjecture}


\usepackage{xcolor}
\usepackage{makeidx}
\usepackage{caption}
\usepackage{subcaption}
\usepackage{scalerel}
\usepackage[normalem]{ulem}
\definecolor{ballblue}{rgb}{0.13, 0.67, 0.8}

\newcommand{\diag}{\operatorname{diag}}

\NewDocumentCommand{\newstuff}{ +m O{} } {{\color{blue}#1}}

\def\C{\mathbb{C}}
\def\M{\mathbb{M}}

\def\A{\mathcal{A}}

\def\B{\mathcal{B}}

\def\E{\mathcal{E}}

\def\S{\mathcal{S}}
\def\T{\mathcal{T}}

\def\W{\mathcal{W}}

\def\Y{\mathcal{Y}}

\begin{document}

\title{Quantum nonlocality without entanglement and state discrimination measures}

\author{Shayeef Murshid}
    \affiliation{Electronics and Communication Sciences Unit, Indian Statistical
    Institute, 203 B. T. Road, Kolkata 700108, India}
    \email{shayeef.murshid91@gmail.com}
\author{Tathagata Gupta}
    \affiliation{Physics and Applied Mathematics Unit, Indian Statistical
    Institute, 203 B. T. Road, Kolkata 700108, India}
    \email{tathagatagupta@gmail.com}
    \orcid{0009-0005-6319-1590}
    \thanks{Present address: Department of Physics, Indian Institute of Technology Madras, Chennai 600036, India}
\author{Vincent Russo}
    \affiliation{Unitary Foundation, 315 Montogomery St, Fl 10 San Francisco,
    California 94104, USA} \email{vincent@unitary.foundation}
    \orcid{0000-0002-8952-8981}
\author{Somshubhro Bandyopadhyay}
    \affiliation{Department of Physical Sciences, Bose Institute, EN 80,
    Bidhannagar, Kolkata 700091, India} 
    \email{som@jcbose.ac.in}
    \orcid{0000-0002-4678-6331}

\maketitle
\begin{abstract} 
   An ensemble of product states is said to exhibit “quantum nonlocality without entanglement” if it cannot be optimally discriminated using local operations and classical communication (LOCC). We show that this property can depend on the chosen discrimination measure. Specifically, we construct a family of ensembles, each consisting of six linearly independent, equally probable bipartite product states, for which LOCC fails to achieve optimal minimum-error discrimination but succeeds in achieving optimal unambiguous discrimination. We further extend our construction to multipartite systems and provide strong numerical evidence that a similar separation between local and global optima is present for minimum-error discrimination, but not for unambiguous discrimination.  
\end{abstract}

\section{Introduction}
\label{sec:introduction}

Composite quantum systems may exhibit nonlocal properties. For example, the
celebrated Bell nonlocality~\cite{bell1966problem,brunner2014bell} arises from
entangled states through violations of Bell-type inequalities. Somewhat less
known, though well-studied, is quantum nonlocality without
entanglement~\cite{bennett1999quantum}. This nonlocality, which may be viewed as dual
to the Bell type, manifests in state discrimination problems in the
distant-lab paradigm of quantum information theory.

Suppose two distant observers, Alice and Bob, share a state chosen from an
ensemble of product states 
\begin{equation} 
    \label{eq:ensemble}
    \E_{\psi} =\left\{
        \left(\eta_{i},\ket{\psi_i} = 
        \ket{a_i} \otimes\ket{b_i} \right):i=1, \ldots, N
    \right\},
\end{equation} 
where $\eta_{i}$ is the prior probability associated with $\ket{\psi_i}$. They
have complete knowledge of $\E_{\psi}$ but do not know which particular $\ket{\psi_i}$ they share. Their objective is to determine this ``unknown''
state as well as possible. 

The task of determining an unknown state chosen from a known set of states is
the standard state discrimination problem.  However, in this case, Alice and Bob
being physically separated cannot perform any measurement of their choice. The
only measurements they can implement belong to the class of \emph{local
operations and classical communication} (LOCC); that is, they are free to
perform quantum operations on their local systems and communicate via classical
channels but cannot exchange quantum systems. Thus, the question is: Can they
optimally discriminate the given states by LOCC? By optimal discrimination, we
mean attaining the global optimum corresponding to a measure of
state discrimination; in particular, if the given states are orthogonal, optimal
implies perfect discrimination~\cite{bergou200411, chefles2000quantum,
barnett2009quantum}. 

One might expect optimal discrimination of product states, parts of which
may have been prepared separately, would be possible by LOCC. However, it turns out
this is not always the case~\cite{peres1991optimal}. There exist product
ensembles, members of which cannot be optimally discriminated by LOCC, even when
they are mutually orthogonal~\cite{bennett1999quantum}. Such ensembles are said
to exhibit \emph{nonlocality without entanglement}. Here, nonlocality refers to
the fact that joint measurements can extract more information about the quantum
state than LOCC protocols.

One could express this nonlocality in terms of a well-defined state
discrimination measure, such as probability of success or fidelity~\cite{bandyopadhyay2013tight}. Let $f$ be such a measure. For an
ensemble $\E_{\psi}$ and a measurement $\M$ (a POVM), the quantity $0\leqslant
f\left(\E_{\psi};\M\right)\leqslant1$ then tells us how well the members of
$\E_{\psi}$ can be discriminated by $\M$.

Let $f_{_{G}}\left(\E_{\psi}\right)$ and
$f_{_{L}}\left(\E_{\psi}\right)$ be the respective global and local optimum, defined as
\begin{equation}
    \label{f(G)}
    f_{_{G}}\left(\E_{\psi}\right)
    =\sup_{\M}f\left(\E_{\psi};\M\right)
    \quad \text{and} \quad
    f_{_{L}}\left(\E_{\psi}\right) 
    =\sup_{\M\in\text{LOCC}}f\left(\E_{\psi};\M\right).
\end{equation}
Since LOCC is a strict subset of all measurements one may perform
on a composite system, we have 
\begin{equation}
    \label{eq:f(L)<=00003Df)G)}
    f_{_{L}}\left(\E_{\psi}\right) \leqslant
    f_{_{G}}\left(\E_{\psi}\right).
\end{equation}
Therefore, if $\E_{\psi}$ is nonlocal in the sense discussed
above, the above inequality must be strict
\begin{equation}
    \label{f(L)<f(G)}
    f_{_{L}}\left(\E_{\psi}\right) 
    <f_{_{G}}\left(\E_{\psi}\right).
\end{equation}
To check whether a given product ensemble is nonlocal or not, one therefore needs to
compute both optima explicitly. However, for orthogonal states,
computing the local optimum may not be necessary as long as one can show whether or not the states can be perfectly discriminated by LOCC.

Suppose the states $\ket{\psi_i} \in \E_{\psi}$ form an
orthonormal basis. Then perfect discrimination is always possible by a separable
measurement~\cite{bandyopadhyay2015limitations}, the measurement operators being
$\ketbra{\psi_i}{\psi_i} = \ketbra{a_i}{a_i} \otimes \ketbra{b_i}{b_i}$, $i = 1, \ldots, N$. Nevertheless, there exist orthogonal product bases,
elements of which cannot be perfectly discriminated by
LOCC~\cite{bennett1999quantum}. This shows that LOCC is a strict subset of
separable measurements (SEP), which, of course, is a strict subset of global
measurements~\cite{chitambar2014everything}. This can be conveniently expressed
as 
\begin{equation}
    \text{LOCC}\subset\text{SEP}\subset\text{GLOBAL}.
\end{equation}
Thus for any product ensemble $\E_{\psi}$ (or for any ensemble, for
that matter) it holds that
\begin{equation}
    \label{fL<=00003DfS<=00003DfG}
    f_{_{L}}\left(\E_{\psi}\right)\leqslant
    f_{_{S}}\left(\E_{\psi}\right)\leqslant
    f_{_{G}}\left(\E_{\psi}\right),
\end{equation}
where $f_{_{S}}\left(\E_{\psi}\right)=\sup_{\M\in\text{SEP}}f\left(\E_{\psi};\M\right)$.
If at least one of the above inequalities is strict, then $\E_{\psi}$
is nonlocal. For example, if $\E_{\psi}$ is an orthonormal product
basis and is nonlocal, it implies that 
\begin{equation}
    f_{_{L}}\left(\E_{\psi}\right)<f_{_{S}}\left(\E_{\psi}\right)=f_{_{G}}\left(\E_{\psi}\right)=1.
\end{equation}

\subsection*{Motivation}

In this paper we will consider measures corresponding to
minimum-error~\cite{holevo1973statistical} and unambiguous
discrimination~\cite{ivanovic1987differentiate,dieks1988overlap,peres1988differentiate}.
The former minimizes the average error and applies to any set of states; the
corresponding measure is the success probability, the maximum probability that
the unknown state is correctly determined~\cite{bae2013structure}. The latter
strategy, however, can only be applied to sets of states satisfying a certain
condition. If this condition is met, the unknown state can be determined,
without error, with a nonzero probability; for example, a set of pure states can
be unambiguously discriminated if and only if they are linearly
independent~\cite{chefles1998unambiguous}. Note that in the
minimum-error case, the conclusion could be erroneous, whereas in the latter,
there's no room for error -- a measurement outcome either correctly identifies
the state or is inconclusive. 

Once an ensemble is specified, one might expect the relationships between
different measurement optimas to be measure-agnostic. However, this turns out
not to be the case. Consider the double trine ensemble consisting of three
equiprobable two-qubit product states~\cite{peres1991optimal}:
\begin{equation}
    \label{ensemble-1}
    \E_{\alpha} = 
    \left\{
        \left(\frac{1}{3},\ket{\alpha_i} \otimes \ket{\alpha_i} \right)
        : i = 1, 2, 3 
    \right\},
\end{equation}
where
\begin{eqnarray}
    \label{eq:trine}
    \ket{\alpha_1} = \ket{0}, & 
    \ket{\alpha_2} = \displaystyle-\frac{1}{2}\ket{0} -\frac{\sqrt{3}}{2}\ket{1}, & 
    \ket{\alpha_3} =-\frac{1}{2}\ket{0} +\frac{\sqrt{3}}{2}\ket{1}.
\end{eqnarray}
Let $p^{me}\left(\E_{\alpha}\right)$ and $p^{ud}\left(\E_{\alpha}\right)$ denote
the (optimum) success probabilities for minimum-error and unambiguous
discrimination, respectively. Then it holds
that~\cite{chitambar2013revisiting,chitambar2013local}
\begin{align}
    p_{_{L}}^{me}\left(\E_{\alpha}\right) & <p_{_{S}}^{me}\left(\E_{\alpha}\right)=p_{_{G}}^{me}\left(\E_{\alpha}\right),\label{double-trine-minimum-error}\\
    p_{_{L}}^{ud}\left(\E_{\alpha}\right) & =p_{_{S}}^{ud}\left(\E_{\alpha}\right)<p_{_{G}}^{ud}\left(\E_{\alpha}\right).\label{double trine unambiguous}
\end{align}
Note that $\E_\alpha$ exhibits nonlocality without entanglement for both measures. However, the relationships between measurement optimas are different. In the minimum-error case, LOCC is suboptimal to SEP which achieves the global optimum; in the unambiguous case, LOCC and SEP are equally good but remain suboptimal to global measurements. This
led the authors of~\cite{chitambar2013local} to ask the following question:

\emph{Do product ensembles exist that exhibit nonlocality without entanglement
for one discrimination measure but not for another?}

In this paper, we provide a positive answer to this question. Specifically, we construct a
family of product ensembles, each consisting of six linearly
independent, equally probable bipartite product states, for which LOCC fails to achieve
optimal minimum-error discrimination, but achieves optimal unambiguous
discrimination. Thus, these ensembles exhibit nonlocality without entanglement
for minimum-error discrimination, but not for unambiguous discrimination. 

The main result is presented in Section \ref{sec:results}, with proofs provided in Sections \ref{sec:ME} and \ref{sec:UD}. We further extend our construction to multipartite systems. Our numerical investigations provide strong evidence that the resulting multipartite product ensembles exhibit a similar behavior: they remain nonlocal with respect to minimum-error discrimination, but not with respect to unambiguous discrimination. These are discussed in Section \ref{sec:general}. We conclude in Section \ref{sec:conclusion}.

\section{Results}
\label{sec:results}

Consider the following set of states
\begin{equation}
    \label{set S}
    \S =\left\{ \ket{\psi_i}
    \in\C^{3}:i=1,2,3,\;\left\langle \psi_{i}\vert\psi_{j}\right\rangle
    =s\in\left(0,1\right)\;\text{for }i\neq j\right\} .
\end{equation}
The above set of states is linearly independent. This follows from the result
that a set of $N\leqslant d$ pure states in $\C^{d}$ with pairwise real
and equal inner products is linearly independent if and only if the inner
product lies in the interval $\left(-\frac{1}{N-1},1\right)$ \cite{roa2011conclusive}. 

Note that $\S$ is a set of three symmetric pure states. While discrimination between two states is well understood with explicit formulas for the optimal success probability known for both minimum-error \cite{helstrom1969quantum,holevo1974remarks} and unambiguous discrimination \cite{jaeger1995optimal}, the case is quite different for three-state discrimination. Nevertheless, the three-state problem has been investigated under various constraints \cite{sun2001optimum, sugimoto2010complete} and under different settings. In particular, minimum-error and unambiguous discrimination of three optical coherent states have been studied in Refs. \cite{nakahira2018optimal, nakahira2019local} in both one-way LOCC and global measurement scenarios. Additionally, online strategies for the unambiguous identification of three symmetric pure states were considered in Ref. \cite{sentis2022online}.

Define the following set of product states
\begin{equation}
    \label{eq:six-state-set}
    \T = \left\{ 
    \ket{\psi_{i}} \otimes \ket{\psi_{j}} \in\C^{3} \otimes \C^{3}: 
    \ket{\psi_i}, \ket{\psi_j} \in \S, \;i\neq j,\;i,j=1,2,3
    \right\}.
\end{equation}
The set $\T$ is also linearly independent. That is because its elements can
be unambiguously discriminated, and hence, must be linearly independent
(a given set of pure states can be unambiguously discriminated if and only
if they are linearly independent) \cite{chefles1998unambiguous}. 

We study minimum-error and unambiguous discrimination of the members of the ensemble 
\begin{equation}
    \label{E(T)}
    \E_{\T} =\left\{
    \left(\frac{1}{6},\ket{\psi_i} \otimes \ket{\psi_j} \right):i\neq j,\;i,j=1,2,3\right\}.
\end{equation}
The main result is stated as follows.
\begin{theorem}
    \label{thm:main}
    Let $p^{me}\left(\E_{\T}\right)$ and
    $p^{ud}\left(\E_{\T}\right)$ denote the respective
    (optimum) success probabilities for minimum-error and unambiguous
    discrimination of the members of $\E_{\T}$. Then
    \begin{align}
        p_{_{L}}^{me}\left(\E_{\T}\right) & <p_{_{G}}^{me}\left(\E_{\T}\right),\label{thm-1-ME}\\
        p_{_{L}}^{ud}\left(\E_{\T}\right) & =p_{_{G}}^{ud}\left(\E_{\T}\right).\label{thm-1-UD}
    \end{align}
\end{theorem}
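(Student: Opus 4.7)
The plan is to prove the two claims separately, exploiting throughout the $S_3$ symmetry of $\E_\T$ under simultaneous permutation of the three labels of $\S$. For the unambiguous-discrimination equality~(\ref{thm-1-UD}), I would exhibit an explicit two-round LOCC protocol and show that it saturates the global optimum. In the first round Alice performs the optimal unambiguous discrimination on her marginal ensemble $\{(1/3,\ket{\psi_i}):i=1,2,3\}$; since $\S$ consists of three linearly independent states in $\C^{3}$ with real pairwise inner product $s\in(0,1)$, the Gram matrix has the form $(1-s)I+sJ$, whose eigen-decomposition immediately yields the reciprocal states and the optimal unambiguous success $1-s$. Conditional on Alice concluding ``$i$'' and announcing the outcome, Bob faces the ensemble $\{(1/2,\ket{\psi_j}):j\neq i\}$ of two equiprobable states with inner product $s$, whose optimal unambiguous success is again $1-s$. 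Hence this LOCC protocol achieves $(1-s)^{2}$. To complete the argument I would compute $p_G^{ud}(\E_\T)$ via the standard reciprocal-state SDP for unambiguous discrimination of linearly independent pure states, restricted to the six-dimensional span of $\T$, and use the $S_3$ symmetry to diagonalize the joint frame operator; the resulting value should equal $(1-s)^{2}$.

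For the minimum-error inequality~(\ref{thm-1-ME}), I would lower-bound $p_G^{me}(\E_\T)$ by an explicit global measurement, naturally the pretty-good (square-root) measurement on $\E_\T$, whose success probability is a closed-form expression in $s$ computable from the frame operator $\sum_{i\neq j}\ketbra{\psi_i\psi_j}{\psi_i\psi_j}$. For the LOCC upper bound I would reduce, without loss of generality, to protocols whose first round is a fine-grained rank-one POVM $\{\ketbra{\phi_k}{\phi_k}\}$ on Alice's side; each outcome gives Bob a three-state non-orthogonal post-measurement ensemble whose optimal minimum-error success is a known concave function of the weights $|\langle\phi_k|\psi_i\rangle|^{2}$, so that the total LOCC success is expressible as a maximum over Alice's POVM of such a weighted sum.

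Showing that this maximum is strictly below the global lower bound is the main obstacle. The cleanest route is likely to pass to the broader class of positive-partial-transpose (PPT) measurements, for which the minimum-error SDP can be solved using the $S_3$ symmetry; any strict PPT-vs-GLOBAL gap yields the desired strict LOCC-vs-GLOBAL gap. Failing that, a rigidity argument should work: saturating $p_G^{me}$ via LOCC would force Bob's post-measurement conditional states to satisfy orthogonality relations incompatible with the pairwise overlaps $s>0$ in $\S$.
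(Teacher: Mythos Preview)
Your unambiguous-discrimination argument is essentially the paper's: the same two-round LOCC protocol achieving $(1-s)^{2}$, together with a symmetry reduction showing that the global optimum equals the least eigenvalue of the Gram matrix of $\T$, which is $(1-s)^{2}$.

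For the minimum-error inequality there is a genuine gap. The paper does \emph{not} try to upper-bound $p_{_{L}}^{me}(\E_\T)$ by analysing LOCC protocols round by round or by passing to PPT. It instead invokes a structural result of Chitambar and Hsieh (Theorem~\ref{thm:chitambar}): for any linearly independent pure-state ensemble there is a \emph{unique} orthonormal basis $\tilde{\B}$ of its span such that a measurement attains $p_{_{G}}^{me}$ if and only if it perfectly distinguishes $\tilde{\B}$. For $\E_\T$ this basis is the SRM basis $\{\ket{\mu_i}\}$, and the paper proves (i) the span $\W$ admits no orthogonal product basis, hence all $\ket{\mu_i}$ are entangled and related by local unitaries, and (ii) using a necessary condition of Chen \emph{et al.}\ for LOCC distinguishability, that $\{\ket{\mu_i}\}$ cannot be perfectly distinguished by LOCC. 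The strict inequality then follows without ever computing or bounding $p_{_{L}}^{me}$ numerically.

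Your proposed routes do not close the argument. The PPT-SDP idea is only a hope: you give no reason to expect $p_{_{\mathrm{PPT}}}^{me}(\E_\T)<p_{_{G}}^{me}(\E_\T)$, and for product ensembles it is common for separable (hence PPT) measurements to attain the global optimum, in which case this route yields nothing. Your ``rigidity'' fallback is the right instinct---it is morally the content of Theorem~\ref{thm:chitambar}---but without that theorem you have no way to turn ``the optimal projectors are entangled'' into ``no LOCC protocol attains the optimum,'' since without the uniqueness statement there could in principle be other optimal measurements. And in the direct one-way analysis, even granting the reduction to an Alice-first rank-one POVM, Bob's conditional task is a six-hypothesis problem on three non-orthogonal states with outcome-dependent priors; its optimum is not a ``known concave function'' in any closed form you can maximise over Alice's POVM and then compare to the SRM value. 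The missing ingredient throughout is precisely Theorem~\ref{thm:chitambar}, which converts the quantitative gap into a clean yes/no LOCC-distinguishability question.
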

Thus $\E_{\T}$ is nonlocal with respect to minimum-error
discrimination but not unambiguous discrimination. 

We prove Eqs. \eqref{thm-1-ME} and \eqref{thm-1-UD} in Sections \ref{sec:ME} and \ref{sec:UD}, respectively. To prove \eqref{thm-1-ME}, we first reduce the problem of optimal LOCC discrimination of the ensemble $\E_\T$ to that of perfect LOCC discrimination of a corresponding set of orthogonal states, using a theorem from \cite{chitambar2013revisiting}. We then show that these orthogonal states cannot be perfectly discriminated by LOCC, which implies that the states in $\E_\T$ cannot be optimally discriminated using LOCC. Consequently, in the minimum-error setting, the optimal success probability achievable by LOCC is strictly smaller than the global optimum. 

To prove \eqref{thm-1-UD}, we explicitly compute the global optimum and show that it can be achieved using LOCC. 

\section{Minimum-error discrimination of $\E_\T$}\label{sec:ME}
Given an ensemble $\E_{\rho}=\left\{ \eta_{i},\rho_{i}\right\} _{i=1}^{N}$ and a
measurement $\M=\left\{ M_{1},\dots,M_{N}\right\} $, which is a collection of
positive operators forming a resolution of the identity, the error probability
is given by
\begin{align}
    p_{error}\left(\E_{\rho},\M\right) & =\sum_{\begin{array}{c}
    i,j=1\\
    i\neq j
    \end{array}}^{N}\eta_{i}\Tr\left(M_{j}\rho_{i}\right).\label{p-error}
\end{align}
In minimum-error discrimination, the goal is to find a measurement that minimizes $p_{error}\left(\E_{\rho},\M\right)$, thereby maximizing
the success probability. The minimum-error probability is given by
\begin{equation}
    \label{p-min-error}
    p_{error}\left(\E_{\rho}\right)
    =\min_{\M}p_{error}\left(\E_{\rho},\M\right)
\end{equation}
 and the success probability
\begin{align}
    \label{success-prob-min-error}
    p^{me}\left(\E_{\rho}\right) &
    =1-p_{error}\left(\E_{\rho}\right).
\end{align}
The advantage of this approach is that it applies to any set of
states. Finding the optimal
solution for an arbitrary ensemble, however, is
hard. Nevertheless, if the given states are pure and linearly independent, as in our case, an optimal measurement
consists of orthonormal, rank one projectors~\cite{mochon2006family}. This
result was subsequently strengthened by the following theorem.

\begin{theorem}[\cite{chitambar2013revisiting}]\label{thm:chitambar}
    Let $\E_{\chi}=\left\{ \eta_{i},\ket{\chi_{i}} \right\}_{i=1}^{N}$ be an
    ensemble of linearly independent pure states spanning a space $\Y$. There
    exists a unique orthonormal basis $\left\{\ket{\xi_{i}}
    \right\} _{i=1}^{N}$ of $\Y$ such that a measurement that achieves optimal
    minimum-error discrimination of the members of $\E_{\chi}$ also perfectly
    discriminates the states $\ket{\xi_{1}}, \ldots, \ket{\xi_{N}}$ and vice versa. 
\end{theorem}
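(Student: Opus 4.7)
The plan is to leverage Mochon's theorem~\cite{mochon2006family}, cited just above the statement, which guarantees that an optimal minimum-error POVM for linearly independent pure states consists of orthonormal rank-one projectors. Concretely, I would begin by fixing any optimal MED measurement $\{\Pi_i\}_{i=1}^N$ for $\E_\chi$; by Mochon it has the form $\Pi_i = \ketbra{\xi_i}{\xi_i}$ for some family $\{\ket{\xi_i}\}_{i=1}^N$, which, because the $\Pi_i$ resolve the identity on $\Y$, is necessarily an orthonormal basis of $\Y$. This is the candidate basis.

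The forward implication is then immediate: the measurement $\{\ketbra{\xi_i}{\xi_i}\}$ returns outcome $i$ with certainty on input $\ket{\xi_i}$, so it perfectly discriminates $\{\ket{\xi_i}\}_{i=1}^N$. For the converse, suppose a POVM $\{M_i\}_{i=1}^N$ perfectly discriminates the $\ket{\xi_i}$, i.e., $\bra{\xi_j} M_i \ket{\xi_j} = \delta_{ij}$. Positivity of $M_i$ together with $\bra{\xi_j} M_i \ket{\xi_j} = 0$ for $j\neq i$ forces $M_i \ket{\xi_j} = 0$ for all $j \neq i$; hence, restricted to $\Y$, each $M_i$ must equal $\ketbra{\xi_i}{\xi_i}$. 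Since the MED success probability on $\E_\chi$ depends only on the restriction of the POVM to the support $\Y$, the measurement $\{M_i\}$ achieves exactly the optimum.

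The main obstacle is the uniqueness assertion, which is tantamount to uniqueness (on $\Y$) of the optimal MED measurement. I would attack this through the Holevo-Yuen-Kennedy-Lax (HYKL) optimality conditions: a POVM $\{M_i\}$ is MED-optimal iff the Lagrange operator $L = \sum_j \eta_j \ketbra{\chi_j}{\chi_j} M_j$ is Hermitian and $L \geq \eta_i \ketbra{\chi_i}{\chi_i}$ for every $i$. Substituting the Mochon form $M_i = \ketbra{\xi_i}{\xi_i}$ and expanding each $\ket{\xi_i}$ in the (linearly independent) family $\{\ket{\chi_j}\}$, the Hermiticity constraint on $L$ becomes a linear system whose solution is rigid because the Gram matrix of $\{\ket{\chi_j}\}$ is invertible. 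Uniqueness of the $\ket{\xi_i}$ (up to global phase) then follows, and with it uniqueness of the basis. This algebraic rigidity, furnished entirely by linear independence, is where I expect the proof to be most delicate; the rest of the argument is essentially unpacking of Mochon's structure theorem and a positivity observation.
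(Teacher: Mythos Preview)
The paper does not prove Theorem~\ref{thm:chitambar}; it is quoted verbatim from~\cite{chitambar2013revisiting} and used as a black box (see the sentence immediately following the theorem). There is therefore no in-paper proof to compare your proposal against.

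On the proposal itself, the architecture is the standard one and mostly sound, but two points deserve tightening. First, your ``forward implication'' only shows that the \emph{particular} optimal measurement you started from discriminates the $\ket{\xi_i}$; to conclude that \emph{every} optimal measurement does so you must already have uniqueness in hand, so logically uniqueness has to come before, not after, that step. Second, the uniqueness sketch is where the real content lies, and the mechanism you describe (``Hermiticity of $L$ becomes a linear system rigid because the Gram matrix is invertible'') is not quite the right one. Hermiticity alone does not pin down the $\ket{\xi_i}$. The clean argument is: (i) the Lagrange operator $L$ is the unique optimal solution of the dual SDP, hence the same for every primal optimum; (ii) $L$ is strictly positive on $\Y$, since $L\geq \eta_i\ketbra{\chi_i}{\chi_i}$ for every $i$ and the $\ket{\chi_i}$ span $\Y$; (iii) complementary slackness gives $(L-\eta_i\ketbra{\chi_i}{\chi_i})\ket{\xi_i}=0$, and a rank-one downward perturbation of a positive-definite operator has a kernel of dimension at most one. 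Hence each $\ket{\xi_i}$ is determined up to phase. Linear independence enters through (ii), not through a Gram-matrix linear system as you suggest.
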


The above theorem reduces the problem of optimal minimum-error discrimination of
linearly independent pure states to that of perfect discrimination of mutually
orthonormal pure states. We will make use of this fact to prove \eqref{thm-1-ME}. 

\subsection{Proof of LOCC suboptimality}

The following proposition follows from Theorem \ref{thm:chitambar}.
\begin{proposition}\label{prop}
    Let $\W$ be the subspace of $\C^{3}\otimes\C^{3}$
spanned by the elements of $\T$. There exists a unique orthonormal basis $\tilde{\B}$ of $\W$ such that a measurement that achieves the global optimum
$p_{_{G}}^{me}\left(\E_{\T}\right)$ perfectly discriminates the elements of $\tilde{\B}$ and vice versa.
\end{proposition}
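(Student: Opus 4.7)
The plan is to apply Theorem~\ref{thm:chitambar} directly to the ensemble $\E_{\T}$, so that the proposition reduces to a verification of hypotheses. First I would record what has already been established earlier in the paper: $\E_{\T}$ consists of $N=6$ pure product states $\ket{\psi_i}\otimes\ket{\psi_j}$ with $i\neq j$, each carrying prior probability $1/6$. The six states are linearly independent, as noted just after \eqref{eq:six-state-set} via Chefles's characterization together with the observation that they admit unambiguous discrimination. By definition, $\W$ is precisely the span of these six states in $\C^{3}\otimes\C^{3}$.

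With these facts in hand, the hypotheses of Theorem~\ref{thm:chitambar} are satisfied verbatim, with $\E_{\chi}$ instantiated as $\E_{\T}$ and $\Y$ as $\W$. Invoking the theorem therefore yields a unique orthonormal basis $\tilde{\B}=\{\ket{\xi_{k}}\}_{k=1}^{6}$ of $\W$ such that any measurement attaining the minimum-error optimum $p_{_{G}}^{me}(\E_{\T})$ also perfectly discriminates the elements of $\tilde{\B}$, and conversely. This is exactly the statement of the proposition, so no independent argument is required.

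The only subtlety worth flagging is that $\W$ is a proper $6$-dimensional subspace of the $9$-dimensional ambient space $\C^{3}\otimes\C^{3}$, so $\tilde{\B}$ lives in $\W$, not in the full bipartite space. A global POVM attaining $p_{_{G}}^{me}(\E_{\T})$ therefore consists of the rank-one projectors $\ketbra{\xi_{k}}{\xi_{k}}$ on $\W$, supplemented by any POVM elements on $\W^{\perp}$ that do not affect the success probability (since $\E_{\T}$ has no support there). Because Theorem~\ref{thm:chitambar} is already stated at the level of the span of the ensemble, this restriction/extension is built into the statement, and there is no genuine obstacle — the proposition is essentially a specialization of Theorem~\ref{thm:chitambar} to the ensemble $\E_{\T}$, set up so that the subsequent sections can focus on the real work, namely showing that the orthonormal basis $\tilde{\B}$ cannot be perfectly discriminated by LOCC.
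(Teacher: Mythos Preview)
Your proposal is correct and matches the paper's own treatment: the paper simply states that the proposition follows from Theorem~\ref{thm:chitambar}, and your write-up is precisely that specialization, with the hypotheses (six linearly independent pure states spanning $\W$) verified explicitly. Your remark about extending the optimal measurement by the projector onto $\W^{\perp}$ anticipates what the paper does in the paragraph following Corollary~\ref{cor}, so there is nothing to add or correct.
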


\begin{corollary}\label{cor}
    LOCC achieves the global optimum $p_{_{G}}^{me}\left(\E_{\T}\right)$ if and only if the elements of $\tilde{\B}$ can be perfectly discriminated by LOCC.
\end{corollary}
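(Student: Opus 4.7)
The plan is to obtain the corollary as an immediate consequence of Proposition \ref{prop}. That proposition is itself a biconditional characterization of the measurements that achieve $p_{_{G}}^{me}(\E_\T)$: a POVM on $\C^3\otimes\C^3$ attains the global minimum-error optimum on $\E_\T$ if and only if it perfectly discriminates the elements of $\tilde{\B}$. The corollary is then obtained by restricting the class of POVMs on both sides of this equivalence to LOCC measurements. No fresh structural input about $\T$, $\tilde{\B}$, or the LOCC hierarchy is needed at this stage; the work has already been done in Proposition \ref{prop}.

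For the forward direction, I would assume that some LOCC measurement $\M$ satisfies $p^{me}(\E_\T;\M)=p_{_{G}}^{me}(\E_\T)$. Proposition \ref{prop} then forces $\M$ itself to perfectly discriminate the elements of $\tilde{\B}$, so the very same LOCC protocol $\M$ witnesses perfect LOCC discrimination of $\tilde{\B}$. For the reverse direction, I would start from an LOCC measurement $\M'$ that perfectly discriminates the elements of $\tilde{\B}$ and invoke the ``vice versa'' clause of Proposition \ref{prop} to conclude that $\M'$ already attains $p_{_{G}}^{me}(\E_\T)$ on $\E_\T$. In either case the witnessing measurement on the two sides of the equivalence is literally the same POVM, which is what makes the passage through Proposition \ref{prop} clean.

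Since the whole argument is just a direct reading of Proposition \ref{prop} with the side-condition ``$\M$ is LOCC'' carried through both implications, there is no real obstacle. The only minor point worth noting in writing is a domain-matching remark: $\tilde{\B}$ is an orthonormal basis of the six-dimensional subspace $\W\subset\C^3\otimes\C^3$ spanned by $\T$, so ``perfect LOCC discrimination of $\tilde{\B}$'' should be read in its natural sense as the existence of an LOCC POVM on the full bipartite space whose outcomes correctly identify each of the six orthonormal vectors of $\tilde{\B}$. This is precisely the class of measurements to which Proposition \ref{prop} already refers, so nothing else needs to be said before concluding the corollary.
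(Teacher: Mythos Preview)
Your proposal is correct and is essentially the paper's own approach: the paper states the corollary immediately after Proposition~\ref{prop} without a separate proof, treating it as the obvious restriction of the proposition's biconditional to LOCC measurements. Your write-up simply makes explicit the two implications that the paper leaves implicit, and the domain-matching remark you add is accurate and harmless.
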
 In what follows, we will find this unique orthonormal basis and show that its elements cannot be perfectly discriminated by LOCC. Therefore from Corollary \ref{cor}, optimal minimum-error discrimination of the elements of $\E_{\T}$ is not possible by LOCC. 

For ease of understanding, we will use the following notation to represent the
elements of $\T$:
\begin{eqnarray*}
    \ket{\phi_{1}} = \ket{\psi_{1}} \otimes \ket{\psi_{2}} & \;\; & 
    \ket{\phi_{2}} = \ket{\psi_{1}} \otimes \ket{\psi_{3}} \\
    \ket{\phi_{3}} = \ket{\psi_{2}} \otimes \ket{\psi_{1}}  & \;\; & 
    \ket{\phi_{4}} = \ket{\psi_{2}} \otimes \ket{\psi_{3}} \\
    \ket{\phi_{5}} = \ket{\psi_{3}} \otimes \ket{\psi_{1}} & \;\; & 
    \ket{\phi_{6}} =\ket{\psi_{3}} \otimes \ket{\psi_{2}} 
\end{eqnarray*}
and therefore our ensemble can be written as $\E_{\T}=\left\{ \frac{1}{6},\ket{\phi_{i}} \right\}_{i=1}^{6}$.

Since the states $\ket{\phi_{i}}$ are linearly independent, the optimal
measurement on $\W$ consists of orthogonal rank-one
projectors~\cite{mochon2006family}. Let this measurement be $\left\{
E_{i}=\ketbra{e_i}{e_i}\right\}_{i=1}^{6}$, where $\Tr\left(E_{i}E_{j}\right) =
\delta_{ij}$ for $i, j = 1, \ldots, 6$, and
$\sum_{i=1}^{6}E_{i}=\mathbf{1}_{\W}$. Thus the optimal measurement on
$\C^{3}\otimes\C^{3}$ is given by $\left\{
E_{1},\dots,E_{6},\left(\mathbf{1}_{3\times3}-\mathbf{1}_{\W}\right)\right\} $,
where $\left(\mathbf{1}_{3\times3}-\mathbf{1}_{\W}\right)$ is the projector onto
$\W^{\perp}$. It follows that $\tilde{\B}=\left\{ \ket{e_{1}}, \ldots,
\ket{e_{6}} \right\} $ must be the unique orthonormal basis of $\W$ mentioned in Prop. \ref{prop}. 

Our objective is to find the measurement $\left\{ E_{i} =
\ketbra{e_i}{e_i} \right\}_{i=1}^{6}$ as it would immediately lead to
$\tilde{\B}$. Fortunately, this measurement turns out to be the well-known
square-root measurement (SRM), also known as pretty-good
measurement~\cite{hausladen1994pretty}. The SRM operators are one-dimensional
projectors~\cite{sasaki1998quantum,eldar2001quantum,dalla2015optimality}
\begin{equation}
    \label{SRM operators}
    \mu_{i} =\ketbra{\mu_{i}}{\mu_{i}}, \quad i = 1, \ldots, 6
\end{equation}
satisfying $\ip{\mu_{i}}{\mu_{j}} =\delta_{i,j}$ for all $i,j=1,\dots,6$ and
$\sum_{i=1}^{6}\mu_{i}=\mathbf{1}_{\W}$, where
\begin{equation}
    \label{SRM vectors}
    \ket{\mu_{i}} = 
    \rho^{-1/2}\frac{1}{\sqrt{6}}\ket{\phi_{i}} \quad \text{ and
    }\quad 
    \rho=\frac{1}{6}\sum_{i=1}^{6}\ketbra{\phi_i}{\phi_i}.
\end{equation}
Note that the vectors $\left\{\ket{\mu_{i}} \right\}_{i=1}^{6}$ form an
orthonormal basis of $\W$. 
\begin{proposition}\label{thm:SRM}
    The square-root measurement is optimal for minimum-error discrimination of
    the members of $\E_\T = \left\{ \frac{1}{6},\ket{\phi_{i}} \right\}_{i=1}^{6}$. 
\end{proposition}
\begin{proof}
    We know that for minimum-error discrimination
    of a set of linearly independent pure states, the SRM is optimal when all the
    diagonal elements of the square root of the Gram matrix of the states are
    equal~\cite{sasaki1998quantum}.  A straightforward calculation shows this is indeed the case (proof in
    Appendix~\ref{gram_matrix}). Therefore our claim is
    proved.
\end{proof}
We therefore have $\tilde{\B}=\left\{ \ket{\mu_1}, \ldots, \ket{\mu_6}
\right\}$. Note that while $\ket{\phi_i}$ are product vectors, $\ket{\mu_i}$
may not be. We now prove that the
vectors $\ket{\mu_1}, \ldots, \ket{\mu_6}$ cannot be perfectly discriminated by
LOCC. 

First we prove the following property of $\W$. 
\begin{lemma}\label{lem:OPB}
    $\W$ does not admit an orthogonal product basis. 
\end{lemma}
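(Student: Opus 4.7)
My plan is to assume for contradiction that $\W$ admits an orthogonal product basis (OPB) $\{\ket{u_k}\otimes\ket{v_k}\}_{k=1}^{6}$, and squeeze this assumption through three successive layers of constraints. First I would characterize every product vector in $\W$. Let $\{\ket{\tilde\psi_i}\}_{i=1}^{3}$ denote the basis of $\C^3$ dual to $\{\ket{\psi_i}\}$, i.e.\ $\ip{\tilde\psi_i}{\psi_j}=\delta_{ij}$. Since $\{\ket{\psi_i}\otimes\ket{\psi_j}\}_{i,j=1}^{3}$ is a basis of $\C^3\otimes\C^3$ and $\T$ is precisely its off-diagonal part, expanding in the dual basis gives $\W^\perp=\mathrm{span}\{\ket{\tilde\psi_i}\otimes\ket{\tilde\psi_i}:i=1,2,3\}$. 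Hence $\ket{a}\otimes\ket{b}\in\W$ iff $\ip{\tilde\psi_i}{a}\ip{\tilde\psi_i}{b}=0$ for every $i$, and a brief case analysis on which factor vanishes at each index forces every product vector in $\W$ to be of the form $\ket{\psi_i}\otimes\ket{b}$ with $\ket{b}\in\V_i:=\mathrm{span}\{\ket{\psi_j}:j\neq i\}$, or of the form $\ket{a}\otimes\ket{\psi_i}$ with $\ket{a}\in\V_i$.

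Second, I would partition the hypothetical OPB into $R$ (states whose Alice factor is proportional to some $\ket{\psi_i}$) and $C$ (states whose Bob factor is), which together cover all six elements. Orthogonality of OPB elements, combined with the fact that $\ip{\psi_i}{\psi_j}\in\{1,s\}$ is always nonzero, forces the Bob factors of the $R$-states to be pairwise orthogonal vectors in $\C^3$; hence $\abs{R}\leqslant 3$, and symmetrically $\abs{C}\leqslant 3$. Since $\abs{R\cup C}=6$, this pins down $\abs{R}=\abs{C}=3$ and $R\cap C=\emptyset$. A further refinement shows that the three Alice-indices attached to the $R$-states, and the three Bob-indices attached to the $C$-states, must each be a permutation of $\{1,2,3\}$: otherwise two $R$-states share an Alice-index $i$, their Bob factors span $\V_i$, and the third would be forced into the one-dimensional $\V_i^\perp$ while also needing to lie in some $\V_j$ with $j\neq i$. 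Using the symmetric parametrization $\ket{\psi_i}=\alpha\ket{e_0}+\beta\ket{v_i}$ with the $\ket{v_i}$ trine states in $\C^2$, one checks that $\V_i^\perp\subset\V_j$ forces $\beta^2=2\alpha^2$, equivalent to $s=0$, which is excluded by hypothesis.

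Third, after relabeling I would have $R=\{\ket{\psi_i}\otimes\ket{b_i}\}_{i=1}^{3}$ and $C=\{\ket{a_i}\otimes\ket{\psi_i}\}_{i=1}^{3}$ with $\ket{a_i},\ket{b_i}\in\V_i$ and each of $\{\ket{a_i}\}_{i=1}^{3}$, $\{\ket{b_i}\}_{i=1}^{3}$ an orthonormal basis of $\C^3$. Cross-orthogonality between the $R$-state and the $C$-state carrying the same index $i$ yields $\ip{\psi_i}{a_i}\ip{b_i}{\psi_i}=0$. Since the unique direction in $\V_i$ orthogonal to $\ket{\psi_i}$ is $\ket{d_i}\propto\ket{\psi_j}-\ket{\psi_k}$ for $\{i,j,k\}=\{1,2,3\}$, for each $i$ we must have $\ket{a_i}\propto\ket{d_i}$ or $\ket{b_i}\propto\ket{d_i}$. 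A one-line calculation shows $\ket{d_i}$ and $\ket{d_{i'}}$ are not orthogonal for $i\neq i'$ (their normalized inner product is $1/2$), so at most one $i$ can satisfy $\ket{a_i}\propto\ket{d_i}$ and at most one can satisfy $\ket{b_i}\propto\ket{d_i}$. Covering all three indices with only two such slots is impossible, which is the desired contradiction.

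The step I expect to require the most care is the middle refinement that rules out coincidences among the Alice-indices of $R$ and the Bob-indices of $C$. This is the one place where the hypothesis $s\in(0,1)$ enters in a non-obvious way, through the explicit form of $\V_i^\perp$; the rest of the argument is forced by orthogonality counting together with the single inner-product evaluation $\ip{d_i}{d_{i'}}\neq 0$.
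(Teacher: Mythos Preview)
Your argument is correct and follows a genuinely different route from the paper's. The paper proceeds by adjoining to a putative OPB $\B_{\mathrm{OPB}}$ of $\W$ the single product vector $\ket{\psi'_1}\otimes\ket{\psi'_1}\in\W^\perp$ and then shows, via a case analysis on which tensor factor carries each orthogonality, that no product vector lies in the two-dimensional complement of the resulting seven-element set; this makes $\B_{\mathrm{OPB}}\cup\{\ket{\psi'_1}\otimes\ket{\psi'_1}\}$ an unextendible product basis of size $7$, contradicting the known classification that every UPB in $\C^3\otimes\C^3$ has exactly five elements. Your approach is instead fully self-contained: you first classify \emph{all} product vectors in $\W$ through the dual basis (forcing one tensor factor to be some $\ket{\psi_i}$ and the other to lie in $\V_i$), then use orthogonality counting to pin down $\abs{R}=\abs{C}=3$ with distinct indices, and finish with the pigeonhole contradiction coming from $\ip{d_i}{d_{i'}}\neq 0$. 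The paper's proof is shorter but imports a nontrivial external result on UPBs; yours is more elementary, exposes explicitly the structure of product vectors in $\W$, and isolates exactly where the hypothesis $s\neq 0$ is needed (namely in the step $\V_i^\perp\not\subset\V_j$, equivalent to $\beta^2\neq 2\alpha^2$).
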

\begin{proof}
    The proof is by contradiction. Assume that $\B_\mathrm{OPB}$ is an orthogonal product basis of $\W$. Let $\ket{\psi'_1}$ be a unit vector orthogonal to $\ket{\psi_2}$ and $\ket{\psi_3}$. Since the subspace orthogonal to the span of $\{\ket{\psi_2},\ket{\psi_3}\}$ is one-dimensional,
    this vector $\ket{\psi'_1}$ is unique. 

    Let us now assume that $\ket{\alpha}\otimes\ket{\beta}$ is a product state in the subspace orthogonal to that spanned by $\B_\mathrm{OPB}\cup
    \{\ket{\psi'_1}\otimes\ket{\psi'_1}\}$. Therefore, $\ket{\alpha}\otimes\ket{\beta}$
    is orthogonal to every member of the set
    \begin{equation}
       \mathcal{A} = \{\ket{\psi'_1}\otimes\ket{\psi'_1}, \ket{\psi_i}\otimes\ket{\psi_j}: i,j=1,2,3, \; i \neq j\}.
    \end{equation}
    This can happen if $\ket{\alpha}$ is orthogonal to Alice's state or
    $\ket{\beta}$ is orthogonal to Bob's state, or both. Since $\A$ contains
    seven states, this means there are at least four states where either
    $\ket{\alpha}$ is orthogonal to Alice's state (and $\ket{\beta}$ is
    orthogonal to the rest)  or $\ket{\beta}$ is orthogonal to Bob's state (and
    $\ket{\alpha}$ is orthogonal to the rest). 
    
    We first consider the case where $\ket{\alpha}$ is orthogonal to Alice's side in exactly four states of $\A$. This can have two possible subcases.
    \begin{enumerate}
    \item \textbf{Subcase 1.1:} $\ket{\alpha}$ is nonorthogonal to
    $\ket{\psi'_1}$. 
    
    Given that $\ket{\alpha}$ cannot be orthogonal to all states in
    $\{\ket{\psi_i}\}_{i=1}^3$, it will be orthogonal to exactly two of them. Let them be $\ket{\psi_1}$ and $\ket{\psi_2}$. This means $\ket{\beta}$ is orthogonal to $\{\ket{\psi'_1},
    \ket{\psi_1},\ket{\psi_2}\}$. We now show that the set $\{\ket{\psi'_1},
    \ket{\psi_1},\ket{\psi_2}\}$ is linearly independent and as a result
    $\ket{\beta}$ cannot be orthogonal to all of them. 

    Let $\ket{\psi'_1}=a\ket{\psi_1}+b\ket{\psi_2}+c\ket{\psi_3}$ and note that when $c\neq 0$, the set $\{\ket{\psi'_1}, \ket{\psi_1},\ket{\psi_2}\}$ is linearly independent. Therefore when $c=0$, 
    \begin{equation}
        \begin{aligned}
            & \braket{\psi_2}{\psi'_1}=as+b=0, \text{ and } \\
            & \braket{\psi_3}{\psi'_1}=as+bs=0.
        \end{aligned}
    \end{equation}
    Since $s \in (0.1)$, it follows that $b=0$. This implies $\ket{\psi'_1}$ and $\ket{\psi_1}$ are linearly dependent,
    which is a contradiction. Therefore, $c$ is necessarily nonzero and as a
    result $\{\ket{\psi'_1}, \ket{\psi_1},\ket{\psi_2}\}$ is linearly
    independent. Note that a similar argument shows that $\{\ket{\psi'_1}, \ket{\psi_1},\ket{\psi_3}\}$ is linearly
    independent. 
    
    \item \textbf{Subcase 1.2:} $\ket{\alpha}$ is orthogonal to
    $\ket{\psi'_1}$. 
    
    In this case, $\ket{\psi'_1}\otimes\ket{\psi'_1}$ can be regarded as an
    element of the four-state set whose states on Alice's side are orthogonal to
    $\ket{\alpha}$. Therefore $\ket{\alpha}$ has to be orthogonal to at least
    two states in $\{\ket{\psi_i}\}_{i=1}^3$. From the previous case it can be
    noted that each of the sets $\{\ket{\psi'_1}, \ket{\psi_1},\ket{\psi_2}\}$ and
    $\{\ket{\psi'_1}, \ket{\psi_1},\ket{\psi_3}\}$ is linearly independent. The
    remaining set $\{\ket{\psi'_1}, \ket{\psi_2},\ket{\psi_3}\}$ is linearly
    independent since $\ket{\psi'_1}$ is orthogonal to $\ket{\psi_2}$ and
    $\ket{\psi_3}$. 
    \end{enumerate}

    In both subcases, $\ket{\alpha}$ needs to be
    orthogonal to a linearly independent spanning set of $\C^3$, which is
    impossible. Therefore the case where $\ket{\alpha}$ is orthogonal to Alice's side in exactly four states of $\A$ is ruled out. The remaining cases where $\ket{\alpha}$ is orthogonal to Alice's side
    in five or more states are easy to discard, since these require $\ket{\alpha}$ to be orthogonal to Alice's side in at least four states of $\A$. By symmetry, we can argue that there cannot be four or more states in $\A$ where $\ket{\beta}$ is orthogonal to Bob's side. Therefore, we have shown that
    there exists no product state orthogonal to the subspace spanned by the
    states $\B_\mathrm{OPB}\cup \{\ket{\psi'_1}\otimes\ket{\psi'_1}\}$. Since
    $\ket{\psi'_1}\otimes\ket{\psi'_1}$ is orthogonal to the subspace $\W$, this
    means that $\B_\mathrm{OPB}\cup \{\ket{\psi'_1}\otimes\ket{\psi'_1}\}$ forms
    an unextendible product basis (UPB). But we know that in $\C^3 \otimes \C^3$
    any UPB contains exactly five     elements~\cite{divincenzo2003unextendible,bej2021unextendible} and therefore
    we reach a contradiction. Consequently, there is no orthogonal product basis
    of $\W$. This completes our proof.  
\end{proof}

Lemma \ref{lem:OPB} tells us that $\tilde{\B}$ cannot be an orthonormal product
basis, which means at least one of its elements $\ket{\mu_i}$ must be entangled.
We will now show that all elements of $\tilde{\B}$ are entangled and also have
the same Schmidt rank.
\begin{lemma}\label{lem:local_unitary}
    $\tilde{\B}=\left\{ \ket{\mu_1},\ldots, \ket{\mu_6} \right\} $ is an
    entangled orthonormal basis of $\W$. Moreover, the vectors $\ket{\mu_i}$
    have the same Schmidt rank. 
\end{lemma}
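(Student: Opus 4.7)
Since the orthonormal-basis property of $\tilde{\B}$ was already established from the SRM construction above, the plan reduces to showing that each $\ket{\mu_k}$ is entangled and that all six vectors share a common Schmidt rank. My approach exploits the $S_3$-permutation symmetry of $\S$. Because the three vectors $\ket{\psi_1},\ket{\psi_2},\ket{\psi_3}$ have common pairwise inner product $s$, their Gram matrix is $(1-s)I_3 + sJ_3$, which is invariant under conjugation by every $3\times 3$ permutation matrix. A standard Gram-matrix argument then produces, for each $\sigma\in S_3$, a unitary $U_\sigma$ on $\C^3$ satisfying $U_\sigma\ket{\psi_i}=\ket{\psi_{\sigma(i)}}$ for $i=1,2,3$. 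This existence result is the first step I would establish cleanly.

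Next I would consider the local unitary $V_\sigma:=U_\sigma\otimes U_\sigma$. It sends every $\ket{\phi_k}=\ket{\psi_i}\otimes\ket{\psi_j}$ (with $i\neq j$) to $\ket{\psi_{\sigma(i)}}\otimes\ket{\psi_{\sigma(j)}}$, which again lies in $\T$; hence $V_\sigma$ merely permutes the six elements of $\T$. Consequently $V_\sigma$ commutes with $\rho=\frac{1}{6}\sum_k\ketbra{\phi_k}{\phi_k}$, and therefore with $\rho^{-1/2}$. Combining this with the SRM formula $\ket{\mu_k}=\rho^{-1/2}\ket{\phi_k}/\sqrt{6}$ shows that $V_\sigma$ permutes the six vectors of $\tilde{\B}$ in the corresponding induced way.

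The natural $S_3$-action on ordered pairs $(i,j)$ with $i\neq j$ is transitive on the six such pairs, so every $\ket{\mu_k}$ is the image of any other under some $V_\sigma$. Because $V_\sigma$ is a local unitary, Schmidt rank is preserved, and all six $\ket{\mu_k}$ must share a single common Schmidt rank. Lemma~\ref{lem:OPB} tells us that $\W$ admits no orthogonal product basis, so at least one $\ket{\mu_k}$ has Schmidt rank at least $2$; by the common-rank conclusion, every element of $\tilde{\B}$ is then entangled, and all share this same Schmidt rank, which is precisely the lemma's assertion.

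The main obstacle I anticipate is the first step: articulating carefully that the permutation symmetry of the Gram matrix produces genuine unitaries $U_\sigma$ on $\C^3$ permuting the $\ket{\psi_i}$ (rather than merely an isometry between subspaces). Once that is in hand, the remainder is a short symmetry chase with no genuine calculation required, and the same symmetry would also be useful later in computing the actual Schmidt rank if needed.
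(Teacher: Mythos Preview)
Your proposal is correct and follows the paper's strategy: show that the $\ket{\mu_i}$ are related by local unitaries arising from the permutation symmetry of $\S$, conclude they share a common Schmidt rank, and then invoke Lemma~\ref{lem:OPB} to force that rank to be at least~$2$. The paper phrases the local unitaries as $U_{i_1 j_1}\otimes U_{i_2 j_2}$, a transposition on each tensor factor chosen independently, whereas you use the diagonal action $U_\sigma\otimes U_\sigma$ for $\sigma\in S_3$. Your choice is in fact the cleaner one: independent transpositions $U_{\tau_1}\otimes U_{\tau_2}$ need not send $\T$ to itself (for instance $U_{(12)}\otimes U_{(23)}$ maps $\ket{\psi_1}\ket{\psi_3}$ to $\ket{\psi_2}\ket{\psi_2}\notin\T$), so the commutation with $\rho$ asserted in the paper is not automatic there, while the diagonal $S_3$-action manifestly preserves the set $\{(i,j):i\neq j\}$ and, as you note, already acts transitively on it.

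Regarding your anticipated obstacle, there is none: since $\ket{\psi_1},\ket{\psi_2},\ket{\psi_3}$ are linearly independent in $\C^3$ they form a basis, so the linear map determined by $U_\sigma\ket{\psi_i}=\ket{\psi_{\sigma(i)}}$ is a well-defined operator on all of $\C^3$, and permutation-invariance of the Gram matrix makes it inner-product preserving on this basis and hence unitary. The paper simply asserts the existence of such unitaries without comment, so you are being more careful here, not less.
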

To prove this, we will show that the vectors $\ket{\mu_{i}}$ are connected by
local unitaries. We suppress the $1/\sqrt{6}$ factor for convenience.
\begin{proof}
    Let us write $\ket{\mu_i}$ as
    \[\ket{\mu_i}=\rho^{-1/2}\ket{\psi_{i_1}}\ket{\psi_{i_2}}, \text{ where }
    i_1 \neq i_2\] and let $U_{{i_1}{j_1}}:\C^3 \rightarrow \C^3$ be the unitary
    operator that satisfies \[U_{{i_1}{j_1}}\ket{\psi_{i_1}}=\ket{\psi_{j_1}},
    U_{{i_1}{j_1}}\ket{\psi_{j_1}}=\ket{\psi_{i_1}}\] and is identity on the
    rest. Note that, since
    \[\rho=\frac{1}{6}\sum_{i=1}^6 \dyad{\phi_i}\] we have
    \[\left(U_{{i_1}{j_1}}\otimes U_{{i_2}{j_2}}\right)^{-1} \rho
    \left(U_{{i_1}{j_1}}\otimes U_{{i_2}{j_2}}\right)=\rho.\] Thus
    $U_{{i_1}{j_1}}\otimes U_{{i_2}{j_2}}$ commutes with $\rho$ and hence
    $\rho^{-1/2}$. Therefore, 
    \begin{equation*}
        \begin{aligned}
            \left(U_{{i_1}{j_1}}\otimes U_{{i_2}{j_2}}\right) \ket{\mu_i} &=\left(U_{{i_1}{j_1}}\otimes U_{{i_2}{j_2}}\right)\rho^{-1/2}\ket{\psi_{i_1}}\ket{\psi_{i_2}}\\ &=\rho^{-1/2}\left(U_{{i_1}{j_1}}\otimes U_{{i_2}{j_2}}\right)\ket{\psi_{i_1}}\ket{\psi_{i_2}} \\ &=\rho^{-1/2}\ket{\psi_{j_1}}\ket{\psi_{j_2}}\\ &=\ket{\mu_j}.
        \end{aligned}
    \end{equation*} This completes the proof.
\end{proof}

We now come to the main result of this section. 
\begin{lemma}\label{lem:main}
    The elements of the basis $\tilde{\B}=\left\{ \ket{\mu_1}, \ldots,
    \ket{\mu_6} \right\} $ cannot be perfectly discriminated by LOCC. 
\end{lemma}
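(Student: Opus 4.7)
My plan is to argue by contradiction. Suppose $\tilde{\B}$ admits a perfect LOCC discrimination protocol. Since $\T$ (and hence $\rho$, $\rho^{-1/2}$, and $\tilde{\B}$) is invariant under exchanging Alice's and Bob's subsystems, we may assume the first nontrivial measurement in the protocol is performed by Alice, and by the standard reduction of LOCC to rank-one POVMs we take it to be $\{A_k = \ketbra{\alpha_k}{\alpha_k}\}_k$ with unnormalised $\ket{\alpha_k} \in \C^3_A$ and $\sum_k A_k = I_A$.

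For subsequent LOCC to succeed after outcome $k$, the six Bob-side slices $(\bra{\alpha_k}\otimes I_B)\ket{\mu_i} \in \C^3_B$ must be pairwise orthogonal whenever nonzero. Since $\C^3_B$ admits at most three pairwise orthogonal nonzero vectors, $\ket{\alpha_k}$ must annihilate at least three of the $\ket{\mu_i}$, i.e.\ it must lie in the Alice-kernel (the kernel of the Alice-reduced density operator) of at least three of them.

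By Lemma \ref{lem:OPB} the $\ket{\mu_i}$ are not product, and by Lemma \ref{lem:local_unitary} they share a common Schmidt rank $r \in \{2, 3\}$. If $r = 3$, every Alice-kernel is trivial and no nonzero $\ket{\alpha_k}$ annihilates any $\ket{\mu_i}$---immediate contradiction. If $r = 2$, each Alice-kernel is a one-dimensional line in $\C^3_A$, and the diagonal $S_3$-action $U_\sigma \otimes U_\sigma$ that permutes $\tilde{\B}$ (supplied by the proof of Lemma \ref{lem:local_unitary}) carries the six Alice-kernels into a single $S_3$-orbit of size $d \in \{1, 2, 3, 6\}$, with each line in the orbit being the Alice-kernel of exactly $6/d$ of the $\ket{\mu_i}$. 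The ``at least three killed'' requirement forces $6/d \geq 3$, so $d \in \{1, 2\}$; but then every orthogonality-preserving $\ket{\alpha_k}$ lies in the union of the $d \leq 2$ distinct kernel lines, which spans at most a two-dimensional subspace of $\C^3_A$. Hence $\sum_k \ketbra{\alpha_k}{\alpha_k}$ has rank at most $2$ and cannot equal $I_A$. The swap symmetry rules out Bob going first by the same argument, completing the contradiction.

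The main obstacle I anticipate is the standard-but-nontrivial LOCC reduction to a rank-one first measurement; the remainder is a clean dimension count combined with the $S_3$-orbit structure supplied by Lemmas \ref{lem:OPB} and \ref{lem:local_unitary}.
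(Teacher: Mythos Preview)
Your argument is correct and follows a genuinely different route from the paper. The paper invokes the Chen--Li criterion (Lemma~\ref{lem:chen}): if the $\ket{\mu_i}$ were perfectly LOCC distinguishable, each would decompose into product vectors lying in $\mathrm{span}\{\ket{\mu_i},\ket{\psi'_1}\ket{\psi'_1},\ket{\psi'_2}\ket{\psi'_2},\ket{\psi'_3}\ket{\psi'_3}\}$ with nonzero $\ket{\mu_i}$-component, and an explicit computation of $\rho^{-1/2}$ in a nonorthogonal basis (occupying most of Appendix~\ref{main_lemma_proof}, including a verification that three particular coordinates $\mu_{11},\mu_{12},\mu_{14}$ are nonzero for all $s\in(0,1)$) shows that any two such product vectors are proportional---so no entangled $\ket{\mu_i}$ admits the required decomposition. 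You instead analyse the first local measurement directly: a rank-one Alice outcome leaves at most three nonzero Bob-slices in $\C^3_B$, forcing the measurement vector into at least three of the Alice-kernels; the $S_3$ orbit structure then confines all admissible measurement vectors to a union of at most two lines, contradicting completeness of the POVM. What your approach buys is a computation-free proof, uniform in $s$, that makes the obstruction a transparent dimension-and-orbit count; the paper's route avoids the first-measurement LOCC reduction you flag as the main subtlety, trading it for substantial explicit calculation. One small point worth tightening: the proof of Lemma~\ref{lem:local_unitary} as written uses unitaries of the form $U_{i_1j_1}\otimes U_{i_2j_2}$ with independently chosen transpositions, whereas it is specifically the diagonal action $U_\sigma\otimes U_\sigma$ for $\sigma\in S_3$ that commutes with $\rho$ and acts freely and transitively on the six ordered pairs---this is precisely the action you invoke, and it is what makes the equal-multiplicity claim $6/d$ valid.
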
 The proof is given in Appendix~\ref{main_lemma_proof}, a sketch of which is presented below. 

We use a result by Chen \textit{et.
al.}~\cite{chen2003orthogonality} which provides a necessary condition for a set
of orthogonal pure states to be LOCC distinguishable. 
\begin{lemma}(\cite{chen2003orthogonality}) \label{lem:chen}
    If the states $\ket{\mu_1}, \ldots, \ket{\mu_6}$
    can be perfectly discriminated by LOCC, then for every $i \in
    \{1,\dots,6\}$ it holds that \begin{equation}\label{eq:chen-sum}
        \ket{\mu_i}=\sum_j
    \ket{\alpha_{ij}}\otimes\ket{\beta_{ij}},\end{equation} where 
    \begin{subequations}\label{eq:chen-conditions}
    \begin{gather} \left(\bra{\alpha_{im}}\otimes\bra{\beta_{im}}\right)\ket{\mu_i} \neq 0 \text{ for all i}, \text{
        and,} \\ \left(\bra{\alpha_{jm}}\otimes\bra{\beta_{jm}}\right)\ket{\mu_i}=0 \; \text{ for all m
        and } \; j \neq i.
     \end{gather} \end{subequations}
\end{lemma} 
We know that $
\ket{\mu_1}, \ldots, \ket{\mu_6}$ are entangled. Now if they are LOCC distinguishable, there must be at least two
linearly independent product vectors appearing in the decomposition of each $\ket{\mu_i}$ such that these two conditions are satisfied. In Appendix \ref{main_lemma_proof} we show that this is impossible. 
   
As noted earlier, by virtue of Theorem~\ref{thm:chitambar}, a
measurement that achieves optimal minimum-error discrimination of the members of
$\E_{\T}$ must perfectly discriminate the elements of the
basis $\tilde{\B}$. Since LOCC cannot perfectly discriminate the
elements of the basis $\tilde{\B}$, by Corollary \ref{cor}  we have the following theorem: 
\begin{theorem}
    The optimal minimum-error discrimination of the elements of $\E_{\T}$ cannot be achieved by LOCC.  
\end{theorem}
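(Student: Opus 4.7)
The plan is to derive this theorem as an immediate one-line consequence of the chain already assembled in this section; the theorem itself requires no new computation, only a careful book-keeping of what has been proved. First, I would invoke Corollary~\ref{cor}, which asserts that LOCC attains the global optimum $p_{_{G}}^{me}\left(\E_\T\right)$ if and only if LOCC can perfectly discriminate the unique orthonormal basis $\tilde{\B}$ of $\W$ guaranteed by Proposition~\ref{prop}. This reduces the problem entirely to a statement about perfect distinguishability of a fixed six-element basis.

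Next, I would appeal to the identification of $\tilde{\B}$ carried out earlier in the section. Because the diagonal entries of $\sqrt{G}$ (the square root of the Gram matrix of $\E_\T$) are all equal (verified in Appendix~\ref{gram_matrix}), the square-root measurement is optimal for $\E_\T$, and consequently its rank-one projectors pin down $\tilde{\B}$ as precisely $\{\ket{\mu_1},\ldots,\ket{\mu_6}\}$ with $\ket{\mu_i}$ as in \eqref{SRM vectors}. The uniqueness clause in Theorem~\ref{thm:chitambar} ensures no other candidate basis can arise.

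Finally, Lemma~\ref{lem:main} rules out perfect LOCC discrimination of this very basis. Taking the contrapositive of Corollary~\ref{cor}, I obtain $p_{_{L}}^{me}\left(\E_\T\right) < p_{_{G}}^{me}\left(\E_\T\right)$, which is exactly the content of the theorem. The main obstacle has already been discharged inside Lemma~\ref{lem:main}, whose full proof sits in Appendix~\ref{main_lemma_proof} and rests on the Chen et al.\ criterion (Lemma~\ref{lem:chen}) applied to the entangled basis $\tilde{\B}$ produced by Lemma~\ref{lem:local_unitary}. Conditional on that lemma, the present theorem collapses to a single sentence citing Corollary~\ref{cor} together with Lemma~\ref{lem:main}.
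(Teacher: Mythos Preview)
Your proposal is correct and mirrors the paper's own argument: the theorem is derived by combining Corollary~\ref{cor} (LOCC achieves $p_{_{G}}^{me}(\E_\T)$ iff it perfectly discriminates $\tilde{\B}$) with Lemma~\ref{lem:main} (LOCC cannot perfectly discriminate $\tilde{\B}$), after identifying $\tilde{\B}$ with the SRM basis via the equal-diagonal property of $\sqrt{\Gamma}$. The paper presents this as a one-sentence deduction immediately preceding the theorem, exactly as you describe.
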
 Therefore,
    $p_{_{L}}^{me}\left(\E_{\T}\right)<p_{_{G}}^{me}\left(\E_{\T}\right)$,
    proving \eqref{thm-1-ME} of Theorem~\ref{thm:main}.

\section{Unambiguous discrimination of $\E_\T$}\label{sec:UD}

Unambiguous discrimination
(UD)~\cite{ivanovic1987differentiate,dieks1988overlap,peres1988differentiate} of the members of
an ensemble $\E_\chi=\{\eta_i,\ket{\chi_i}\}_{i=1}^N$, involves constructing a measurement $E=\{E_i\}_{i=0}^N$ with $N+1$ outcomes such that 
\begin{equation}
    \label{eq:ud-condition}
    \Tr\left(E_i\dyad{\chi_j}\right)=p_i \delta_{ij}
\end{equation} 
for all $i,j \in \{1,\dots,N\}$ and $E_0$ corresponds to the
inconclusive outcome. Here $p_i$, called the
\textit{efficiency} for $\ket{\chi_i}$, denotes the probability that upon
receiving the input state $\ket{\chi_i}$ the measurement successfully identifies
it. When the efficiencies are demanded to be equal, i.e., $p_i=p$ for all $i
=1,\dots,N$, the corresponding task is referred to as \textit{equiprobable
unambiguous discrimination}~\cite{horoshko2019equiprobable}. Not all sets of pure states allow for unambiguous discrimination; a necessary and
sufficient condition for unambiguous discrimination of a set of pure states is that
the states should be linearly independent~\cite{chefles1998unambiguous}. 

The objective is to find a measurement that minimizes the probability of the inconclusive outcome
\begin{equation}
    p_?(\E_\chi,E) = \sum_{i = 1}^N \eta_i \Tr\left(E_0\dyad{\chi_i}\right). 
\end{equation} 
Thus, the maximum probability of conclusively identifying the state is given by
\begin{equation}
  p^{ud}(\E_\chi)= 1-\min_E p_?(\E_\chi,E)=\max_{\{p_i\}}\sum_{i=1}^N \eta_ip_i.
\end{equation}
Finding the optimal probability of success is hard for a general problem, with solutions being known only for the two-state case~\cite{jaeger1995optimal} and in some cases involving symmetries and
constraints~\cite{sugimoto2010complete,sun2001optimum}. However, it can be cast as a semidefinite program~\cite{boyd2004convex,eldar2003semidefinite} 
\begin{equation}
    \label{primal_sdp}
    \begin{aligned}
        &{\text{maximize}}
        & & \sum_{i=1}^N\eta_ip_i \\
        & \text{subject to}
        & & \Gamma -P \succeq 0  \\
        &&& P \succeq 0
    \end{aligned}
\end{equation} 
where $P = \diag(p_1, \dots, p_N)$ and $\Gamma$ is the Gram matrix of
the set of states having the entries $\Gamma_{ij}=\braket{\chi_i}{\chi_j}$. The
dual to the above problem is~\cite{gupta2024unambiguous}
\begin{equation}\label{dual_sdp}
    \begin{aligned}
        & \underset{Z,\vec{z}}{\text{minimize}}
        & & \Tr(\Gamma Z) \\
        & \text{subject to}
        & & z_i+\eta_i-Z_{ii}=0 \\
        &&& Z,\vec{z} \succeq 0.
    \end{aligned}
\end{equation}
where $Z_{ii}$ denotes the $i^{th}$ diagonal entry of $Z$. 

\subsection{Proof of LOCC optimality}

The states of $\T$ admit unambiguous discrimination since they
are linearly independent. We can lower bound the optimum probability of success
by using the following result which provides the optimum probability for
unambiguous discrimination of any set of equiprobable linearly independent
states whose pairwise inner products are equal and real. 
\begin{lemma}(\cite{roa2011conclusive})
    \label{lem:roa-opt-prob}
    Let $S_{N}=\left\{ \ket{\psi_i}: 2\leqslant i\leqslant
    N\right\} $ be a set of equally likely, linearly independent pure states
    with the property $\left\langle \psi_{i}\vert\psi_{j}\right\rangle =s$ for
    $i\neq j$, where $s\in\left(-\frac{1}{N-1},1\right)$. Then the optimum
    probability for unambiguous discrimination is
    \begin{alignat}{1}
        p & =
        \begin{cases}
            1-s, & s\in\left[0,1\right) \\
            1+\left(N-1\right)s, & s\in\left(-\frac{1}{N-1},0\right].
        \end{cases}
    \end{alignat}
\end{lemma}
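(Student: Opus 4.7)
The plan is to prove Lemma~\ref{lem:roa-opt-prob} directly from the semidefinite program \eqref{primal_sdp}, exploiting the permutation symmetry of the Gram matrix. With pairwise inner product $s$, the Gram matrix of the ensemble has the rank-one-plus-identity form $\Gamma = (1-s)\,I_N + s\,J_N$, where $J_N$ is the $N\times N$ all-ones matrix. Since the priors are uniform, the primal objective is $(1/N)\sum_i p_i$.

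First I would argue that an optimum can be taken with all efficiencies equal. For any permutation $\sigma \in S_N$, let $\Pi_\sigma$ be the corresponding permutation matrix; then $\Pi_\sigma \Gamma \Pi_\sigma^{\t} = \Gamma$, so if $(p_1,\ldots,p_N)$ is primal-feasible, the permuted vector $(p_{\sigma(1)},\ldots,p_{\sigma(N)})$ is feasible with the same objective value. Because the feasible region of \eqref{primal_sdp} is convex, averaging over $S_N$ produces a feasible point with $p_i = p := (1/N)\sum_j p_j$ and the same objective. Hence the optimum equals $\max\bigl\{p : \Gamma - p\,I_N \succeq 0,\; p \geq 0\bigr\}$.

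Next I would compute the spectrum of $\Gamma - p\,I_N = (1-s-p)\,I_N + s\,J_N$. Since $J_N$ has eigenvalue $N$ on the uniform vector and $0$ on its orthogonal complement, the eigenvalues of $\Gamma - p\,I_N$ are $1 + (N-1)s - p$ (with multiplicity one) and $1 - s - p$ (with multiplicity $N-1$). Positive semidefiniteness is thus equivalent to the two scalar bounds $p \leq 1-s$ and $p \leq 1 + (N-1)s$, both of which are strictly positive under the hypothesis $s \in (-1/(N-1), 1)$, so the constraint $p \geq 0$ is automatically satisfied at the optimum. The optimum is therefore
\[
p^{ud} \;=\; \min\bigl\{\,1-s,\;1 + (N-1)s\,\bigr\},
\]
which equals $1-s$ for $s \in [0,1)$ and $1 + (N-1)s$ for $s \in (-1/(N-1), 0]$, as claimed.

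The only delicate step is justifying the symmetrization, since one must rule out asymmetric feasible points that achieve a strictly larger objective. A cleaner route — and the one I expect to be the main technical work — is to certify optimality via the dual \eqref{dual_sdp}: choose $Z$ as a symmetric combination of $I_N$ and $J_N$ satisfying $Z_{ii} = \eta_i + z_i$ with $z_i \geq 0$, $Z \succeq 0$, and $\Tr(\Gamma Z)$ equal to the symmetric primal value computed above. Strong duality applies since the primal is strictly feasible (take $P = 0$, noting $\Gamma \succ 0$ for $s$ in the prescribed open interval), so matching primal and dual objectives establishes the lemma unconditionally.
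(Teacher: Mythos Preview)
The paper does not prove this lemma; it is quoted from \cite{roa2011conclusive} and used as a black box to lower-bound $p^{ud}_G(\E_\T)$. Your argument is correct and self-contained, and in fact it mirrors exactly the technique the paper deploys in the proof of Lemma~\ref{lem:T_eq_prob}: symmetrize over the permutation group to reduce the SDP \eqref{primal_sdp} to the single scalar constraint $\Gamma - \lambda I \succeq 0$, then read off the optimum as $\lambda_{\min}(\Gamma)$. Here $\Gamma = (1-s)I_N + sJ_N$ has spectrum $\{1+(N-1)s,\,1-s\}$, and taking the minimum over the two branches gives precisely the stated formula.

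One remark: your worry about the symmetrization step is misplaced. The averaging argument you already gave is complete --- if some asymmetric feasible point achieved a strictly larger objective, its average over $S_N$ would be feasible (by convexity of the feasible region) and would have the \emph{same} objective value (since $(1/N)\sum_i p_i$ is permutation-invariant), so the symmetric optimum could not be strictly smaller. The dual certificate you sketch is therefore unnecessary, though it would of course also work.
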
 
Using the above result, we can find a local protocol that unambiguously
discriminates the members of $\E_\T$ with probability $(1-s)^2$. Given a
state from $\E_\T$, one can successfully identify the state of the first component with probability $1-s$. This leaves the second system in one of the two
remaining states of $\S$ with equal probability, which again can be identified
with probability $(1-s)$. To conclusively identify the given state from $\E_\T$,
one has to obtain conclusive outcomes for both measurements, and hence, the
probability of success is $(1-s)^2$. Note that this is a local protocol that
uses only classical outcome of the first measurement to design the second measurement. Therefore, 
\begin{equation}
    p^{ud}_G(\E_\T) \geq (1-s)^2.
\end{equation} 
We prove \eqref{thm-1-UD} by showing that this is the maximum probability of unambiguous discrimination allowed by quantum theory.
\begin{lemma}\label{lem:T_eq_prob}
   The optimal success probability of unambiguous discrimination of the elements of $\E_\T$ is given by $(1-s)^2$.
\end{lemma}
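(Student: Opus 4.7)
The plan is to prove the matching upper bound $p^{ud}_G(\E_\T) \leq (1-s)^2$ via weak duality for the semidefinite program \eqref{dual_sdp}: any feasible $(Z,\vec z)$ yields $p^{ud}_G(\E_\T) \leq \Tr(\Gamma Z)$, so it suffices to exhibit a dual-feasible pair with objective value $(1-s)^2$. Combined with the lower bound $(1-s)^2$ from the explicit local protocol described just above, equality follows.

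To construct the certificate, I would first exploit the natural symmetry of $\E_\T$: the $S_3$ of simultaneous label permutations on both parties, together with the party-swap, acts transitively on the six states of $\T$. This suggests seeking a symmetric rank-one certificate. Motivated by the sign representation of $S_3$, I propose $Z = \tfrac{1}{6} v v^{\t}$ and $\vec z = 0$, where $v \in \R^6$ is defined by $v_{(i,j)} = \mathrm{sgn}(i,j,k)$ with $k$ the unique element of $\{1,2,3\}\setminus\{i,j\}$; explicitly $v = (+1,-1,-1,+1,+1,-1)^{\t}$ in the ordering $\ket{\phi_1},\ldots,\ket{\phi_6}$. Dual feasibility is then immediate: $Z \succeq 0$ as a positive multiple of $vv^{\t}$, and $Z_{ii} = \tfrac{1}{6} v_i^2 = \tfrac{1}{6} = \eta_i$, so $\vec z = 0$ satisfies $z_i + \eta_i - Z_{ii} = 0$. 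The dual objective then reduces to $\Tr(\Gamma Z) = \tfrac{1}{6}\, v^{\t}\Gamma v$.

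The one computational step---the main obstacle, though only in the sense of combinatorial bookkeeping---is verifying that $v$ is an eigenvector of $\Gamma$ with eigenvalue $(1-s)^2$, which gives $v^{\t}\Gamma v = 6(1-s)^2$ and hence $\Tr(\Gamma Z) = (1-s)^2$. A direct row-by-row check suffices, since each entry of $\Gamma$ equals $1$, $s$, or $s^2$ according to how many label coordinates of $(i_1,i_2)$ and $(j_1,j_2)$ agree. Conceptually, writing $G^{(1)} = (1-s)I_3 + sJ_3$ for the $3\times 3$ Gram matrix of $\S$, the matrix $\Gamma$ is the principal submatrix of $G^{(1)} \otimes G^{(1)}$ indexed by off-diagonal pairs $(i,j)$ with $i \neq j$; extending $v$ by zeros on the three diagonal labels $(i,i)$ places it in the $(1-s)^2$-eigenspace of $G^{(1)} \otimes G^{(1)}$, so the bilinear form $v^{\t}\Gamma v$ equals $(1-s)^2\|v\|^2 = 6(1-s)^2$. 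Weak duality then yields $p^{ud}_G(\E_\T) \leq (1-s)^2$, matching the local lower bound and completing the proof.
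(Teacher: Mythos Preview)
Your proof is correct and takes a genuinely different route from the paper's. The paper first uses the $S_3\times S_3$ permutation symmetry on the \emph{primal} side to argue that the optimal $P$ may be taken to be $\lambda\I$, reducing the SDP to $\max\{\lambda:\Gamma-\lambda\I\succeq0\}$; it then computes all six eigenvalues of $\Gamma$ and verifies by case analysis that $(1-s)^2$ is the smallest one on $(0,1)$. You instead work on the \emph{dual} side and produce an explicit rank-one certificate $Z=\tfrac{1}{6}vv^{\t}$ built from the sign character of $S_3$; since $\Gamma v=(1-s)^2 v$ (which your $G^{(1)}\otimes G^{(1)}$ argument, or a direct row check, confirms), weak duality immediately gives the upper bound without computing any other eigenvalue or comparing them. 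Your approach is slightly more economical---one eigenvector instead of the full spectrum---while the paper's approach makes the structure of the problem (equiprobable optimum, minimum-eigenvalue reduction) more explicit. Both ultimately pivot on the same algebraic fact, namely that the alternating vector $v$ lies in the $(1-s)^2$-eigenspace of $\Gamma$.
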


\begin{proof}
   We will consider the optimization problem in Equation~\eqref{primal_sdp} for
   different ensembles of states that are generated by permutations of the set
   $\{1, \ldots, 6\}$. First note that if $\sigma$ is a permutation of
   \(\{1, 2, \ldots, 6\}\), then the set of states 
   
   \begin{equation}
        \T_\sigma = \left\{ \ket{\psi_{\sigma(i)}} \ket{\psi_{\sigma(j)}} \mid i,j \in \{1,2,3\} \text{ and } i \neq j \right\} 
    \end{equation} 
    has the same Gram matrix as that of $\T$. Therefore, we can denote the Gram
    matrix of $\T_\sigma$ by $\Gamma$ for any permutation $\sigma$. Moreover, if $P
    = \diag(p_1, \ldots, p_6)$ is the optimal solution for $\E_\T$, then the
    optimal solution for $\E_{\T_{\sigma}}$ is given by
    $P_\sigma=\diag\left(p_{\sigma(1)}, \ldots,
    p_{\sigma\left(6\right)}\right)$. 
    
    These solutions satisfy $\Gamma - P_\sigma \succeq 0$ for all permutations
    $\sigma$. These are $6!$ conditions, one for each permutation of the set
    $\{1, \ldots, 6\}$. We add them and use the fact that the sum of two
    positive semidefinite matrices is again positive semidefinite, to get
    \begin{equation}
        \Gamma - P_\mathrm{avg} \succeq 0
    \end{equation} 
    where 
    \begin{equation}
        P_\mathrm{avg} = \frac{1}{6!} \sum_\sigma P_\sigma = \lambda \mathbf{1}
    \end{equation}
    and $\lambda=\left(\sum_{i=1}^6 p_i\right)/6$. This shows that optimal
    unambiguous discrimination of the members of $\E_\T$ is achieved by an equiprobable
    unambiguous discrimination~\cite{horoshko2019equiprobable}; therefore, the optimization problem can be expressed as
    
    \begin{equation}
        \begin{aligned}
            &{\text{maximize}}
            & & \lambda \\
            & \text{subject to}
            & & \Gamma -\lambda \mathbf{1} \succeq 0  \\
            &&& \lambda \succeq 0.
        \end{aligned} 
    \end{equation} 
    
    This is a standard SDP whose solution is given by the minimum eigenvalue of
    $\Gamma$~\cite{boyd2004convex} that corresponds to the optimum success probability. It is straightforward to compute the eigenvalues
    of $\Gamma$ (this matrix is presented in Appendix~\ref{gram_matrix}). The
    eigenvalues are
    
    \begin{equation}
        \left\{1-s,1-s,(1-s)^2,1+s-2s^2,1+s-2s^2,1+2s+3s^2\right\}.
    \end{equation}
    
    It is easy to see that $(1-s)^2<1-s$. For the remaining two eigenvalues, observe that \begin{align*}
        1+s-2s^2 &= (1-s)^2+3s(1-s)>(1-s)^2, \text{ and } \\ 1+2s+3s^2&=(1-s)^2+2s(2+s)>(1-s)^2.
    \end{align*} Thus $(1-s)^2$ is the
    minimum eigenvalue of $\Gamma$, the Gram matrix of $\T$, over the entire interval
    $(0,1)$.
\end{proof} 

Since we have already established that there is a local protocol that succeeds in unambiguously discriminating the members of $\E_\T$ with probability $(1-s)^2$, we have the following theorem
\begin{theorem}
    Optimal unambiguous discrimination of the members of $\E_\T$ is achievable by LOCC. 
\end{theorem}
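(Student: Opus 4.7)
The plan is to observe that the theorem is now essentially a corollary of the two ingredients already assembled in the preceding subsection. The lower bound on the LOCC-achievable probability is furnished by the explicit sequential local protocol described just before Lemma~\ref{lem:T_eq_prob}: Alice first performs an optimal unambiguous discrimination on her register among the three equiprobable states of $\S$, which by Lemma~\ref{lem:roa-opt-prob} succeeds with probability $1-s$ since $\S$ consists of three linearly independent states with pairwise real inner product $s\in(0,1)$. Conditional on a conclusive outcome, Alice's classical label narrows the possibilities for Bob's register to the two remaining states of $\S$, which again form an equiprobable linearly independent pair with inner product $s$; Bob then applies the corresponding optimum unambiguous measurement, succeeding with the same probability $1-s$. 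Multiplying the two conclusive probabilities gives $p^{ud}_{_L}(\E_\T)\geq (1-s)^2$.

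The matching upper bound is exactly Lemma~\ref{lem:T_eq_prob}, which established via the symmetrization argument on the SDP~\eqref{primal_sdp} and the eigenvalue computation of the Gram matrix $\Gamma$ that $p^{ud}_{_G}(\E_\T)=(1-s)^2$. Combining with the general inequality $p^{ud}_{_L}(\E_\T)\leq p^{ud}_{_G}(\E_\T)$ from~\eqref{eq:f(L)<=00003Df)G)}, I would conclude
\[
    (1-s)^2 \;\leq\; p^{ud}_{_L}(\E_\T) \;\leq\; p^{ud}_{_G}(\E_\T) \;=\; (1-s)^2,
\]
forcing equality throughout. This yields $p^{ud}_{_L}(\E_\T)=p^{ud}_{_G}(\E_\T)$, which is precisely~\eqref{thm-1-UD} of Theorem~\ref{thm:main}.

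There is no genuine obstacle remaining; the only thing worth double-checking is that the sequential protocol is legitimately a one-way LOCC procedure and that Bob's conditional measurement is well-defined. The first point is clear because Alice transmits only the classical outcome (one of three labels plus ``inconclusive'') to Bob, who then selects one of three pre-computed two-state unambiguous POVMs. The second point is immediate: for each label Alice announces, the two candidate states on Bob's side are distinct elements of $\S$, hence linearly independent with inner product $s$, so Lemma~\ref{lem:roa-opt-prob} applies with $N=2$ and gives success probability $1-s$. Thus the proof reduces to a short paragraph stitching together these two bounds.
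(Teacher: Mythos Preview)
Your proposal is correct and follows essentially the same approach as the paper: combine the LOCC lower bound $(1-s)^2$ from the sequential local protocol described before Lemma~\ref{lem:T_eq_prob} with the global optimum $(1-s)^2$ established in Lemma~\ref{lem:T_eq_prob} to force equality. The paper's own justification of the theorem is the same one-line observation.
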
 It follows that $p_L^{ud}(\E_\T)=p_G^{ud}(\E_\T)$, proving \eqref{thm-1-UD}. This completes the proof of Theorem \ref{thm:main}.

\section{Extension to multipartite systems}\label{sec:general}

One may consider generalizing our construction. Perhaps the simplest
way to do so is by introducing a third party, Charlie, who is assigned
the remaining state from the set $\S$ as defined in Eq. \eqref{set S}, with
all three parties kept spatially separated. The resulting ensemble
is given by
\begin{align}
\mathcal{E} & =\left\{ \left(\frac{1}{6},|\psi_{i}\rangle\otimes|\psi_{j}\rangle\otimes|\psi_{k}\rangle\right):|\psi_{i}\rangle,|\psi_{j}\rangle,|\psi_{k}\rangle\in \S,;i\neq j\neq k,;i,j,k\in{1,2,3}\right\} .
\end{align}

Now the question is whether the elements of this ensemble can be optimally
distinguished by LOCC in the unambiguous discrimination setting, but
not in the minimum-error setting. For our subsequent discussions,
it would be helpful to work with a more generalized version -- one
that extends to $k$ spatially separated parties. 

Consider a set of $N$ linearly independent states in $\mathbb{C}^{N}$
with equal and real pairwise overlaps,
\begin{align} \label{set S_N}
\S_{N} & =\left\{ |\psi_{i}\rangle\in\mathbb{C}^{N}:i=1,2,\dots,N;\langle\psi_{i}|\psi_{j}\rangle=s\in(0,1)\ \text{for }i\neq j\right\} .
\end{align}

From this set, choose $k$ distinct elements to form the set of product
states,
\begin{align}
\mathcal{T}_{N,k} & =\left\{ |\psi_{i_{1}}\rangle\otimes|\psi_{i_{2}}\rangle\otimes\cdots\otimes|\psi_{i_{k}}\rangle:i_{1},i_{2},\dots,i_{k}\in\{1,\dots,N\};i_{a}\neq i_{b}\ \forall a\neq b\right\} .
\end{align}

The cardinality of this set is $|\T_{N,k}|=\frac{N!}{(N-k)!}={}^{N}P_{k}$,
which corresponds to the number of injective maps from $k$ positions
into $N$ labels. Distributing each subsystem to each of $k$ spatially
separated parties yields the ensemble
\begin{align} \label{eq:gensemble}
\mathcal{E}_{N,k} & =\left\{ \left(\frac{1}{^{N}P_{k}},|\psi_{i_{1}}\rangle\otimes\cdots\otimes|\psi_{i_{k}}\rangle\right):|\psi_{i_{j}}\rangle\in \S_{N}\right\} .
\end{align} 

One may note that the ensemble considered in the previous sections
corresponds to the case $\mathcal{E}_{3,2}$.

Thus the general problem is to investigate LOCC discrimination of
the elements of $\mathcal{E}_{N,k}$ for $N\geq3$ and $2\leq k\leq N$
for both minimum-error and unambiguous paradigms. Our numerical results
suggest that while LOCC achieves optimal unambiguous discrimination,
it remains sub-optimal for minimum-error discrimination. Notably,
in the minimum-error case, it appears that the global optimum is achieved
if and only if all parties come together --- in other words, LOCC
is sub-optimal across all bipartitions. 

\subsection{Minimum-error}
Recall that our proof of Theorem \ref{thm:main} in the minimum-error setting began
with an application of Theorem \ref{thm:chitambar}, which guaranteed the existence of
a unique orthonormal basis for the subspace spanned by the states
of $\mathcal{T}$, such that perfect local distinguishability of this
basis is equivalent to optimal LOCC discrimination of $\mathcal{E}_{\mathcal{T}}$.
We then showed that this basis coincides with the square-root measurement
(SRM) operators for $\mathcal{E_{T}}$, by establishing their optimality
through a property of the Gram matrix of $\mathcal{E_{T}}$ (Proposition
\ref{thm:SRM}). Next, we demonstrated that this basis is entangled. This allowed
us to invoke Lemma \ref{lem:chen} to conclude that the basis is not perfectly distinguishable
by LOCC, which in turn implies that LOCC cannot achieve the global
optimum for distinguishing the elements of $\mathcal{E_{T}}$.

In this subsection, we present evidence suggesting that all these
essential features of the ensemble $\mathcal{E_{T}}$ (equivalently,
$\mathcal{E}_{3,2}$) are present for the more general family $\mathcal{E}_{N,k}$,
for all $N\geq3$ and $2\leq k\leq N$.

We begin by observing that for two states 
$|\varphi\rangle = |\psi_{i_1}\rangle\otimes\cdots\otimes|\psi_{i_k}\rangle$ 
and 
$|\varphi'\rangle = |\psi_{j_1}\rangle\otimes\cdots\otimes|\psi_{j_k}\rangle$ 
in $\mathcal{T}_{N,k}$, the Gram matrix entry factorizes as
\begin{equation}
    \langle\varphi|\varphi'\rangle 
    = \prod_{\ell=1}^{k} \langle\psi_{i_\ell}|\psi_{j_\ell}\rangle
    = s^{\,|\{\ell : i_\ell \neq j_\ell\}|},
\end{equation}
since each factor equals $1$ if $i_\ell = j_\ell$ and $s$ otherwise. 
Consequently, the entry $G_{\varphi\varphi'}$ takes the value $s^m$, where 
$m \in \{0, 1, \ldots, k\}$ counts the number of tensor positions at which 
the two index sequences disagree. 

Now consider the symmetric group $\mathrm{Sym}(N)$ which acts on $\mathcal{T}_{N,k}$ 
by permuting the labels of the component states. For $\sigma \in \mathrm{Sym}(N)$,
\begin{equation}
    \sigma \cdot \bigl(|\psi_{i_1}\rangle\otimes\cdots\otimes|\psi_{i_k}\rangle\bigr)
    = |\psi_{\sigma(i_1)}\rangle\otimes\cdots\otimes|\psi_{\sigma(i_k)}\rangle.
\end{equation}
This action preserves every inner product between states in $\mathcal{T}_{N,k}$ 
and hence leaves the Gram matrix invariant under conjugation by the 
corresponding permutation matrices $P_\sigma$:
\begin{equation}
    P_\sigma\, G\, P_\sigma^T = G \quad \forall\; \sigma \in \mathrm{Sym}(N).
    \label{eq:general-invariance}
\end{equation}
From the uniqueness of the positive semidefinite square root of $G$ we have,
\begin{equation}
    P_\sigma\, G^{1/2}\, P_\sigma^T = G^{1/2} \quad 
    \forall\; \sigma \in \mathrm{Sym}(N).
\end{equation}
Additionally, the action of $\mathrm{Sym}(N)$ on $\mathcal{T}_{N,k}$ is 
\emph{transitive}: given any two states 
$|\psi_{i_1}\rangle\otimes\cdots\otimes|\psi_{i_k}\rangle$ and 
$|\psi_{j_1}\rangle\otimes\cdots\otimes|\psi_{j_k}\rangle$ in 
$\mathcal{T}_{N,k}$, there always 
exists a permutation $\sigma \in \mathrm{Sym}(N)$ mapping one index 
sequence to the other (since $\mathrm{Sym}(N)$ acts transitively on 
ordered $k$-tuples of distinct elements from $\{1,\ldots,N\}$). This means that for the diagonal entries we have
\begin{equation}
    \bigl(G^{1/2}\bigr)_{ii} 
    = \bigl(P_\sigma G^{1/2} P_\sigma^T\bigr)_{ii}
    = \bigl(G^{1/2}\bigr)_{\sigma(i),\sigma(i)}.
\end{equation}
Since this holds for arbitrary $\sigma$ and $i$, all diagonal 
entries of $G^{1/2}$ must be equal
\begin{equation}
    \bigl(G^{1/2}\bigr)_{ii} = c_{N,k}(s) \quad \forall\; i = 1, \ldots, \tfrac{N!}{(N-k)!},
    \label{eq:general-equal-diag}
\end{equation}
where $c_{N,k}(s) > 0$ is a constant depending on $N$, $k$, and the 
inner product $s$.

The above analysis implies that the SRM is optimal for discriminating $\E_{N,k}$. Since the states of $\T_{N,k}$ are linearly independent, the SRM operators form an orthonormal basis of the subspace in $(\C^N)^{\otimes k}$ that is spanned by the elements of $\T_{N,k}$, implying that Theorem \ref{thm:chitambar} is again applicable. To establish features analogous to the $\E_{3,2}$ case, it remains to show that these operators are entangled and cannot be perfectly distinguished by LOCC. For this, we resort to numerical methods. We
compare the success probability of the SRM with that achievable by
PPT measurements (i.e., measurements whose operators are positive under
partial transpose) for distinguishing $\E_{3,3}$, $\E_{4,2}$, and
$\E_{4,3}$. Since PPT measurements strictly contain LOCC and their
optimal performance can be computed via semidefinite programming
\cite{cosentino2013positive}, any gap immediately certifies LOCC
suboptimality.

The results, presented in Figure~\ref{fig:me_gap}, clearly reveal a gap in all cases. The SDPs were solved using CVXPY with the MOSEK solver, and the observed gaps---reaching magnitudes on the order of $10^{-2}$---persist across a dense grid of $30$ overlap parameter values, ruling out numerical artifacts. Moreover, since PPT measurements strictly contain LOCC, the gap between SRM and LOCC performance is at least as large.
\begin{figure}[h]
    \centering
    \includegraphics[width=0.85\columnwidth]{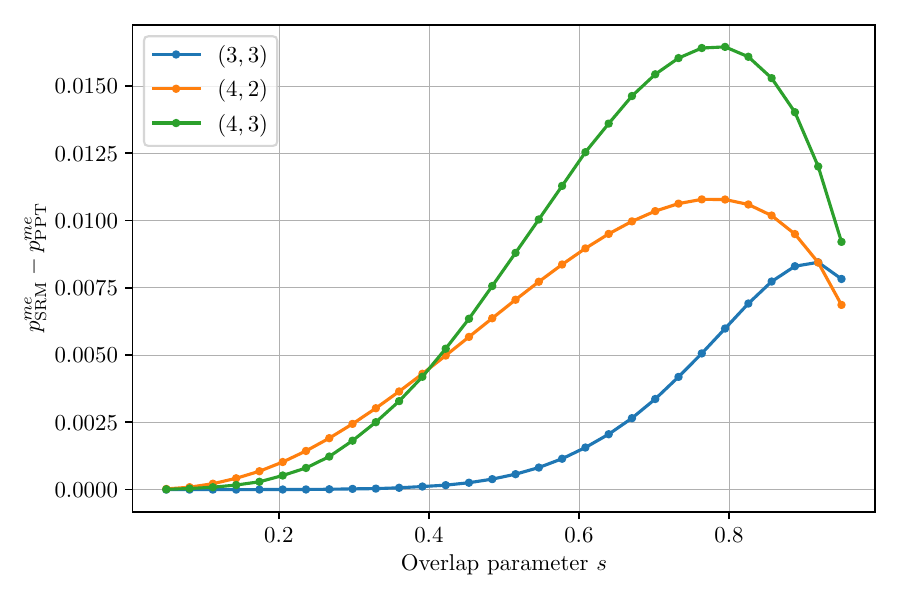}
    \caption{Gap between SRM and PPT success probabilities for minimum-error discrimination of $\E_{N,k}$ as a function of the overlap parameter $s$. A nonzero gap implies that PPT (and hence LOCC) measurements are suboptimal.}
    \label{fig:me_gap}
\end{figure}

Two remarks are in order. First, since the SRM outperforms PPT measurements, at least one SRM element must be entangled. By symmetry, there is no preferred element, suggesting that all SRM operators are entangled, as in the $\E_{3,2}$ case. Second, in solving the SDPs for PPT measurements, the bipartition is taken with respect to the first party and the rest of the system. However, due to the symmetry of the problem, this choice is equivalent to any other bipartition where one party is separated from the rest. This implies that even if all but one of the parties come together and perform LOCC, they cannot optimally distinguish the ensembles; in other words, LOCC is suboptimal across all bipartitions. This indicates that these ensembles exhibit genuine nonlocality without entanglement \cite{halder2019strong,xiong2023distinguishability} in the minimum-error setting.

\subsection{Unambiguous}The optimal unambiguous discrimination probability of $\E_{3,2}$ was derived in Lemma \ref{lem:T_eq_prob}. Its proof relied fundamentally on the permutation symmetry of the system, which dictated that the optimal discrimination is achieved through an equiprobable discrimination of the ensemble. By the arguments of the previous subsection, we observe that this symmetry is also present for all $\E_{N,k}$, suggesting that the reasoning of Lemma \ref{lem:T_eq_prob} extends to the general case. Under this symmetry assumption, the optimal success probability is determined by the minimum eigenvalue of the associated Gram matrix. In the following, we evaluate the eigenvalues for the ensembles $\E_{3,3}$, $\E_{4,2}$ and $\E_{4,3}$ and demonstrate that there exist LOCC protocols attaining the success probabilities predicted by these minimum eigenvalues.
\paragraph{$\E_{3,3}$:}The eigenvalues of the Gram matrix are plotted in Figure~\ref{fig:gram_33}. 
\begin{figure}[ht]
    \centering
    \includegraphics[width=0.85\columnwidth]{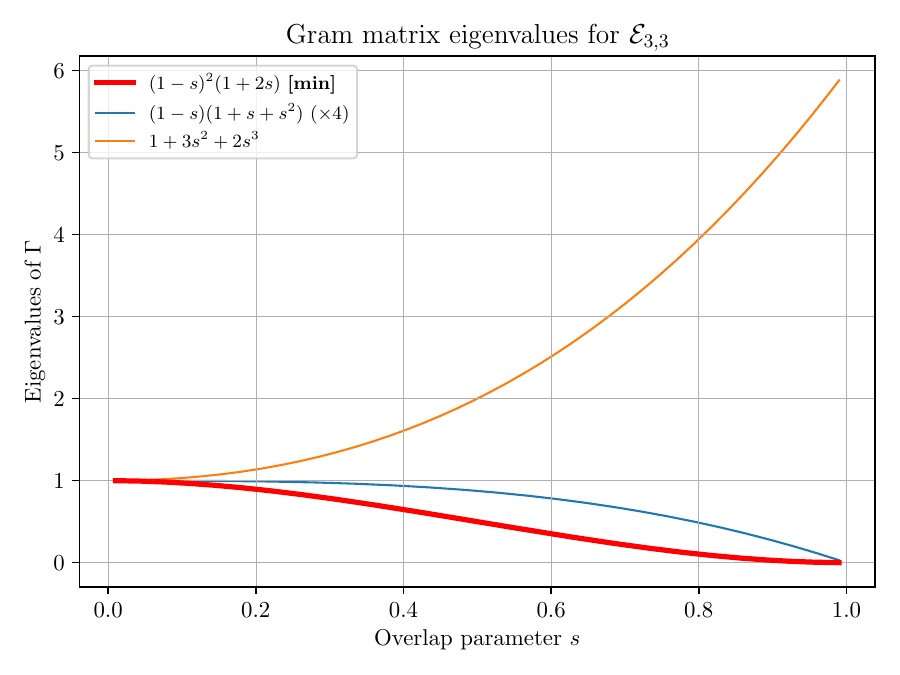}
    \caption{Eigenvalues of the Gram matrix $\Gamma$ for $\E_{3,3}$ as a function of the overlap parameter $s$. The lowest curve (red) corresponds to the minimum eigenvalue $(1-s)^2(1+2s)$, which equals the optimal success probability for unambiguous discrimination.}
    \label{fig:gram_33}
\end{figure}
The minimum eigenvalue is $(1-s)^2(1+2s)$, which determines the optimal success probability for unambiguous discrimination. An LOCC protocol achieving this success probability can be constructed in the following way. Begin by observing that $\E_{3,3}$ is obtained by appending the remaining state from $\S$ to the states of $\E_{3,2}$, and we already know that the states of $\E_{3,2}$ can be locally distinguished with a probability of $(1-s)^2$. Since identifying the states of the first two subsystems uniquely identifies the last, the success probability of discriminating $\E_{3,3}$ is at least $(1-s)^2$. But since the third component is available, and identifying any two of the three is enough to identify the joint state, the success probability for $\E_{3,3}$ increases. Even if one of the first two parties fails to identify their state, there is an overall success as long as the rest are successful. The probability with which exactly one of the first two parties fail, but the rest succeed is $s(1-s)^2$. This yields a total success probability of $(1-s)^2(1+2s)$, matching the minimum eigenvalue.

Note that by a similar analysis, ensembles of the form $\E_{N,N}$ can be locally distinguished with a success probability of $(1-s)^{N-1}(1+(N-1)s)$.

\paragraph{$\E_{4,2}$ and $\E_{4,3}$:}These cases are similar. The eigenvalues of the respective Gram matrices are plotted in Figure~\ref{fig:gram_42_43}. 
\begin{figure}[h]
    \centering
    \includegraphics[width=\textwidth]{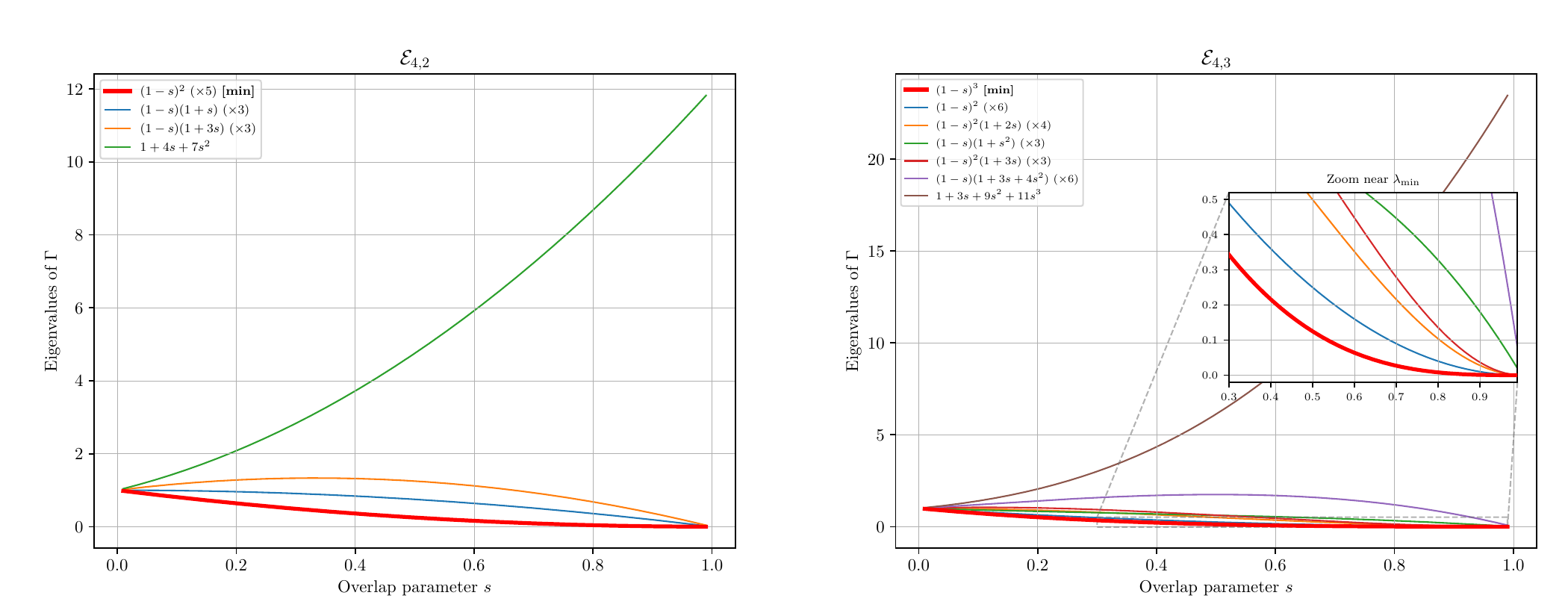}
    \caption{Eigenvalues of the Gram matrix $\Gamma$ for $\E_{4,2}$ (left) and $\E_{4,3}$ (right) as a function of the overlap parameter $s$. The minimum eigenvalue is highlighted in red in each panel. The inset in the right panel zooms near the minimum to resolve closely spaced branches.}
    \label{fig:gram_42_43}
\end{figure}
The minimum eigenvalues are $(1-s)^2$ for $\E_{4,2}$ and $(1-s)^3$ for $\E_{4,3}$. The inset in the $\E_{4,3}$ panel zooms near the minimum eigenvalue to clearly distinguish it from nearby branches. LOCC protocols achieving these success probabilities are similar to the one used in the analysis of $\E_{3,2}$. To successfully identify the given joint state, each party must be successful in identifying the state of their subsystem which, by Lemma \ref{lem:roa-opt-prob}, can be achieved with a probability $(1-s)$. This yields an overall success probability of $(1-s)^2$ and $(1-s)^3$ for the ensembles $\E_{4,2}$ and $\E_{4,3}$ respectively.

\subsection{Conjecture for multipartite systems} 
The evidence and discussion presented above motivates us to put forward the following conjecture.
\begin{conjecture}\label{conjecture}
    For the family of ensembles of multipartite pure product states $\E_{N,k}$ given by Equation \eqref{eq:gensemble} the following holds for all $N \geq 3$ and $2 \leq k \leq N$
    \begin{align}
        p_{_{L}}^{me}\left(\E_{N,k}\right) & <p_{_{G}}^{me}\left(\E_{N,k}\right),\\
        p_{_{L}}^{ud}\left(\E_{N,k}\right) & =p_{_{G}}^{ud}\left(\E_{N,k}\right).
    \end{align}
Moreover, the optimal unambiguous discrimination probability of $\E_{N,k}$ is given by 
\begin{equation}
p_{_{G}}^{ud}\left(\E_{N,k}\right) =
        \begin{cases}
            (1-s)^k, & \text{ when } k < N \\
            (1-s)^{N-1}(1+\left(N-1\right)s), & \text{ when } k=N.
        \end{cases}
    \end{equation}
\end{conjecture}

\section{Discussion}
\label{sec:conclusion}
We have shown that quantum nonlocality without entanglement – a property
ascribed to a set of product states that cannot be optimally discriminated by
LOCC – is not always a property of the underlying states but a consequence of
the chosen measure of state discrimination, i.e. a set of product states could be
nonlocal under one measure but not for another. Specifically, we presented a family
of sets of linearly independent product states, where each set has the following
property: LOCC is sub-optimal for minimum-error discrimination of its members
but optimal for unambiguous discrimination. The former is
proved by showing that LOCC does not satisfy a condition that a measurement
achieving the global optimum must; the latter is demonstrated by explicitly
computing the global optimum and then showing that an LOCC protocol attains
the same.

It is worth noting a curious feature of the ensemble $\mathcal{E}_{3,2}$
in the context of unambiguous discrimination. Although we described
the local protocol achieving the optimum success probability as an
LOCC scheme, the same success probability can in fact be achieved
without any classical communication between the parties. To see this,
suppose Alice performs an optimal measurement for $\S$ on her subsystem
and identifies the state as $|\psi_{1}\rangle$. If she communicates
this outcome to Bob, he can update his ensemble from $\left\{ \left(1/3,|\psi_{i}\rangle\right)\right\} ^{3}_{i=1}\text{ to }\left\{ \left(1/2,|\psi_{2}\rangle\right),\left(1/2,|\psi_{3}\rangle\right)\right\} $,
and then perform the optimal measurement for this reduced ensemble,
succeeding with probability $(1-s)$.

In the absence of communication, Bob instead performs the optimal
measurement for the original ensemble $\{(1/3,|\psi_{i}\rangle)\}^{3}_{i=1}$,
which consists of four measurement operators. However, the detection
operator corresponding to $|\psi_{1}\rangle$ never clicks, as its
support is orthogonal to $|\psi_{2}\rangle$ and $|\psi_{3}\rangle$.
Moreover, for any input state $|\psi_{i}\rangle$, the probability
of obtaining an inconclusive outcome is $s$. Since the states are
equiprobable, the overall failure probability remains $s$, regardless
of whether Bob receives Alice’s outcome. Thus, classical communication
does not improve the success probability. This argument extends straightforwardly
to the ensembles $\mathcal{E}_{N,k}$ for $k<N$. 

Our work raises several directions for future investigation. In Section \ref{sec:general}, we presented evidence suggesting a broader framework of which our construction appears to be the simplest instance. In particular, for the ensemble defined in Eq. \eqref{eq:gensemble}, our results indicate that $p_{L}^{me}\left(\mathcal{E}_{N,k}\right)<p_{G}^{me}\left(\mathcal{E}_{N,k}\right)$, while $p_{L}^{ud}\left(\mathcal{E}_{N,k}\right)=p_{G}^{ud}\left(\mathcal{E}_{N,k}\right)$ for all $N\geq3$ and $2\leq k\leq N$. We expect that the symmetry underlying the construction will play a crucial role in possibly proving this result.

A second open question concerns the precise separation between LOCC and global measurements in the minimum-error setting. For the examples studied here, we observe a nonzero, albeit small, gap between the optimal success probabilities achievable by PPT and global measurements, the latter being attained by the SRM. Whether the optimal LOCC performance coincides with that of PPT measurements, or exhibits a larger gap from the global optimum, remains an interesting open problem.

More broadly, our results raise a natural structural question: what properties of an ensemble give rise to a measure-dependent separation between global and LOCC discrimination \cite{comment1}? Our construction shares two notable features with the trine ensemble of Eq.~\eqref{ensemble-1}, which exhibits nonlocality without entanglement under both minimum-error and unambiguous discrimination. First, both ensembles are built from equiangular local states. Second, the local state labels are correlated: in Eq.~\eqref{ensemble-1}, all parties receive states carrying the same label, whereas in Eq.~\eqref{E(T)}, the labels are constrained to be distinct. Equivalently, the joint probability distribution of the local labels does not factorize in either case. Such correlations are necessarily present in any ensemble exhibiting a local–global separation, since independently chosen component states can be optimally discriminated by fixed local measurements in both the minimum-error and unambiguous settings \cite{gupta2024optimal}. Correlations among the local labels are therefore necessary, although clearly not sufficient, for nonlocality without entanglement.

The optimal unambiguous discrimination probability for $\mathcal{E}_{3,2}$, and the fact that it can be achieved by LOCC, also relies on the discrimination properties of the underlying equally separated states \cite{roa2011conclusive}. It would therefore be interesting to determine whether equiangularity and label correlations are fundamental ingredients of the present construction, useful sufficient conditions, or merely incidental features. In particular, can these be exploited to construct an ensemble exhibiting the opposite separation, \begin{equation}
    p_{L}^{me}=p_{G}^{me}\quad \text{and} \quad p_{L}^{ud}<p_{G}^{ud}, 
\end{equation}thereby reversing the behavior observed here? Conversely, can one do away with equiangularity while retaining suitable correlations among the local labels and still obtain a measure-dependent separation, perhaps with a larger gap between the optimal LOCC and global discrimination probabilities? 

\paragraph{Acknowledgment.}
TG acknowledges financial support from the ANRF National Post-Doctoral Fellowship (NPDF) under File No. PDF/2025/005147. We acknowledge the use of the computing resources at HPCE, IIT Madras.

\paragraph{Author contributions.}
All authors contributed equally.

\bibliographystyle{quantum}
\bibliography{bib}

\onecolumn
\appendix

\section*{Appendix}
In this Appendix we present the proof of Lemma~\ref{lem:main} and the Gram
matrix of the states of $\T$ and its square root. Due to the symmetry of the
problem in Lemma~\ref{lem:main}, two nonorthogonal ordered bases of $\C^3\otimes
\C^3$ lend themselves as natural bases to work in. These are (we suppress the $\otimes$ symbol to avoid clutter)

\begin{equation}\label{eq:bases}
     \begin{aligned}
         \B &= \{\ket{\psi_1}\ket{\psi_2},
         \ket{\psi_1}\ket{\psi_3},\ket{\psi_2}\ket{\psi_1},\ket{\psi_2}\ket{\psi_3},\ket{\psi_3}\ket{\psi_1},\ket{\psi_3}\ket{\psi_2},\{\ket{\psi_i}\ket{\psi_i}\}_{i=1}^3\}\\
         \B' &= \{\ket{\psi'_1}\ket{\psi'_2},
         \ket{\psi'_1}\ket{\psi'_3},\ket{\psi'_2}\ket{\psi'_1},\ket{\psi'_2}\ket{\psi'_3},\ket{\psi'_3}\ket{\psi'_1},\ket{\psi'_3}\ket{\psi'_2},\{\ket{\psi'_i}\ket{\psi'_i}\}_{i=1}^3\},
     \end{aligned}
\end{equation} 
where $\ket{\psi'_i}$ is orthogonal to all vectors in $\S$ except
$\ket{\psi_i}$, $i=1,2,3$. Since these are nonorthogonal bases, we present a few elementary
definitions from linear algebra in the following section, in order to review the
methods of constructing matrix representation of operators in a nonorthogonal
basis. All the material presented can be found in any standard linear algebra
textbook such as~\cite{axler2015linear}.

\section{Representing vectors and operators in nonorthogonal basis}
Let $V$ be a vector space of dimension $n$. Let $\B_1 = \{\ket{v_1}, \dots,
\ket{v_n}\}$ be an ordered basis of $V$. If $\ket{c} \in V$ has the form
\begin{equation}
    \ket{c} = \sum_{i=1}^n c_i \ket{v_i}
\end{equation} 
then the matrix representation of $\ket{c}$ \emph{with respect to the basis}
$\{\ket{v_1}, \dots, \ket{v_n}\}$ is given by 
\begin{equation}\label{eq:vec_matrix}
    \ket{c} = 
    \begin{pmatrix}
        c_1 \\ \vdots \\ c_n
    \end{pmatrix}.
\end{equation} 
It is conventional to write $\ket{c}$ for both the vector and its matrix
representation but we should bear in mind that the latter is basis dependent and
people usually suppress the basis when it is understood from the context. A
consequence of this definition is that, the basis vectors have a representation
consisting of only 1 and 0 in their own basis, 
\begin{equation}\label{eq:own_basis}
    \ket{v_i}=
    \begin{pmatrix}
        0 & \dots & 0 & \underbrace{1}_{\text{$i$-th position}} & 0 & \dots & 0 
    \end{pmatrix}^T
\end{equation} 
where $T$ denotes the transpose. Note that $\B_1$ need not be an orthogonal
basis. This implies that in the Hilbert space $\C^2$, instead of the canonical
basis of orthogonal states $\ket{0}$ and $\ket{1}$ if we use a basis consisting
of nonorthogonal states, say $\ket{0}$ and $\ket{+}=(\ket{0}+\ket{1})/\sqrt{2}$,
then these two will have the matrix representation  
\begin{equation}
    \ket{0}=
    \begin{pmatrix}
        1 \\ 0
    \end{pmatrix} 
    \qquad \text{and} \qquad 
    \ket{+}=
    \begin{pmatrix}
        0 \\ 1
    \end{pmatrix}
\end{equation} 
in their own basis, that is, $\{\ket{0},\ket{+}\}$. Of course, in a
nonorthogonal basis the inner product between two vectors 
\begin{equation}
    \ket{a}= 
    \begin{pmatrix}
        a_1 \\ \vdots \\ a_n
    \end{pmatrix} 
    \qquad \text{and} \qquad 
    \ket{b} = 
    \begin{pmatrix}
        b_1 \\ \vdots \\ b_n
    \end{pmatrix}
\end{equation} 
can no longer be computed by $\braket{a}{b}=\sum_i a_i^* b_i$, a formula which
holds only for orthogonal bases.

Let us now consider another vector space $W$ of dimension $m$ and consider a
linear transformation $A: V \rightarrow W$. In addition to the basis $\B_1$ in
$V$, fix another ordered basis $\B_2 = \{\ket{w_1}, \dots, \ket{w_n}\}$ in $W$.
Then the matrix of $A$ \textit{with respect to these bases} is the $m \times n$
matrix whose elements $a_{jk}$ are defined by 
\begin{equation}\label{eq:operator_matrix}
    A\ket{v_k}=a_{1k}\ket{w_1}+\cdots+a_{mk}\ket{w_m}.
\end{equation}
Like vectors, it is conventional to denote both the operator and its matrix
representation by $A$, but again, the latter is basis dependent which is usually
implicitly understood. A consequence of this definition is that we lose the
familiar matrix representation of operators in nonorthogonal bases. For example,
in the basis $\{\ket{0},\ket{+}\}$, the projection operator $\dyad{0}$ has the
representation 
\begin{equation}
    \dyad{0} = \begin{pmatrix}
        1 & \frac{1}{\sqrt{2}} \\ 0 & 0
    \end{pmatrix}.
\end{equation}

Let us now consider two different ordered bases $\B_1 = \{\ket{v_1}, \dots,
\ket{v_n}\}$ and $\B_2 = \{\ket{v'_1}, \dots, \ket{v'_n}\}$ of $V$, which are
related as 
\begin{equation}
    \ket{v_k} = \sum_{i=1}^n a_{ik} \ket{v'_i}.
\end{equation}
If $\ket{c} \in V$ can be written as
\begin{equation}
    \begin{aligned}
        \ket{c}&=c_1\ket{v_1} + \cdots + c_n \ket{v_n} \\
        &=c'_1\ket{v'_1} + \cdots + c'_n \ket{v'_n}
    \end{aligned}
\end{equation}
then the coefficients are related by
\begin{equation}
    \displaystyle
    \begin{pmatrix}
        c'_1 \\ \vdots \\ c'_n 
    \end{pmatrix} = \displaystyle 
    \begin{pmatrix}
        a_{11} & a_{12} & \dots & a_{1n} \\
        a_{21} & a_{22} & \dots & a_{2n} \\
        \vdots & \vdots & \ddots & \vdots \\
        a_{n1} & a_{n2} & \dots & a_{nn}
    \end{pmatrix} \displaystyle 
    \begin{pmatrix}
        c_1 \\ \vdots \\ c_n
    \end{pmatrix}.
\end{equation}

Let us finish this section with one last remark. Suppose we have a Hermitian
operator $A$ acting on an $n$-dimensional vector space $V$. We have its matrix
representation with respect to an arbitrary basis $\B_1$ and our task is to find
the matrix representation of some function of that operator $f(A)$. We can find
the matrix of $f(A)$ in the same basis $\B_1$ by the following procedure. First
we solve the eigenvalue problem; the eigenvectors will be obtained in the basis
$\B_1$. Let $\lambda_i$ and $\ket{e_i}$ denote the $i$-th eigenvalue and its
corresponding eigenvector respectively, and let the matrix of the $k$-th
eigenvector be 
\begin{equation}
    \ket{e_k}=\begin{pmatrix}
        e_{k1} \\ \vdots \\ e_{kn}
    \end{pmatrix} 
\end{equation} 
in the $\B_1$ basis. Construct the matrices $P$ and $D$ given by
\begin{equation}
    P = 
    \begin{pmatrix}
        \ket{e_1} & \ket{e_2} & \dots & \ket{e_n}
    \end{pmatrix}=
    \begin{pmatrix}
        e_{11} & e_{21} & \dots & e_{n1} \\ e_{12} & e_{22} & \dots & e_{n2} \\
        \vdots & \vdots & \ddots & \vdots \\
        e_{1n} & e_{2n} & \dots & e_{nn}
    \end{pmatrix}
\end{equation} 
and 
\begin{equation}
    D = \begin{pmatrix}
        \lambda_1 & 0 & \dots & 0 \\
        0 & \lambda_2 & \dots & 0 \\
        \vdots & \vdots & \ddots & 0 \\
        0 & 0 & \dots & \lambda_n
    \end{pmatrix}.
\end{equation} 
These matrices satisfy $AP=DP=PD$ from which it follows that 
\begin{equation}
    A = P DP^{-1}.
\end{equation} 
Note that the resultant matrix $P DP^{-1}$ of the operator $A$ is obtained in
the same basis $\B_1$ that we started in, and this basis need not be orthogonal.
To find the matrix of $f(A)$ in $\B_1$, we replace $D$ by $f(D)$ whose $i$-th
diagonal entry is $f(\lambda_i)$. Then we have $f(A)=Pf(D)P^{-1}$. Having
reviewed all the necessary definitions, we now move on to the proof of
Lemma~\ref{lem:main}.

\section{LOCC cannot distinguish the orthonormal basis $\tilde{\B}$} \label{main_lemma_proof}
Before going into the proof, we give a high-level view of our proof strategy for the readers' convenience. 

Since each element of $\{\ket{\mu_i}\}_{i=1}^6$ is entangled, there must be at least two product vectors appearing in the sum in \cref{eq:chen-sum}. As mentioned in the main text, our goal is to show that any two such vectors must be linearly dependent if they also satisfy the conditions of \cref{eq:chen-conditions}. We show this by exploiting a simple property of product states: for any bases $\{\ket{a_i}\}_i$ and $\{\ket{b_i}\}_i$ of $\mathcal{H}_A$ and $\mathcal{H}_B$ respectively, the coefficients of any product vector $\ket{\alpha}\ket{\beta}$ of $\mathcal{H}_A \otimes \mathcal{H}_B$ must satisfy some relations when it is expressed in the basis $\{\ket{a_i} \otimes \ket{b_j}\}_{i,j}$. For example, the product of the coefficients of $\ket{a_1b_1}$ and $\ket{a_2b_2}$ must equal that of $\ket{a_1b_2}$ and $\ket{a_2b_1}$. Note that no assumption of orthogonality is required for this to hold.

A natural basis of $\C^3$ to work in is given by the set $\S$ since it contains three linearly independent states. But upon closer inspection, the basis $\S^{\otimes 2}$ of $\C^3 \otimes \C^3$ (which is same as $\B$ of \cref{eq:bases}) presents a problem. Let $\ket{\alpha}\ket{\beta}$ be a product vector that appears in the decomposition of, say, $\ket{\mu_1}$. Since it is orthogonal to $\{\ket{\mu_i}\}_{i=2}^6$, we would like to express it in terms of $\ket{\mu_1}$ and a basis of $\W^\perp$. But none of the elements of $\B$ is orthogonal to $\W$, which makes the expression cumbersome if we were to use it to write $\ket{\alpha}\ket{\beta}$. To mitigate this problem we introduce the basis $\{\ket{\psi_i'}\}_{i=1}^3$ of $\C^3$ and use it to construct the basis $\B'$ containing the vectors $\{\ket{\psi_i'}\ket{\psi_i'}\}_{i=1}^3$ which span $\W^\perp$. This makes the expression of $\ket{\alpha}\ket{\beta}$ much simpler (see \cref{eq:b_prime} of proof below). We then work in this basis to derive the relations between the coefficients, as described in our proof strategy, yielding Equations \eqref{eq:ratio1} and \eqref{eq:ratio3}, from which the condition of linear dependency follows.

The last part of our proof is devoted to showing that certain coefficients of $\ket{\mu_1}$, which appear as denominators in Eq. \eqref{eq:ratio2}, does not equal 0 in the interval $s \in (0,1)$. This is done by first expressing $\ket{\mu_1}$ in $\B$ (\cref{eq:mu1_in_B}), then finding its expression in $\B'$ by a change of basis (\cref{eq:mu_1_in_B'}).

We now go into the technical details.

\begin{proof}[Proof of Lemma \ref{lem:main}]
    Suppose the states $\{\ket{\mu_i}\}$ are LOCC distinguishable. Then by
    Lemma~\ref{lem:chen}, the product vectors
    $\left\{\ket{\alpha_{ij}}\ket{\beta_{ij}}\right\} \subset \C^3 \otimes \C^3$
    that appear in the decomposition of $\ket{\mu_i}$ must belong to the
    4-dimensional subspace orthogonal to the 5 vectors
    $\left\{\ket{\mu_j}\right\}_{j=1, j\neq i}^6$. A nonorthogonal basis for
    this subspace is given by
    $\left\{\ket{\mu_i},\ket{\psi'_j}\ket{\psi'_j}\right\}_{j=1}^3$ where
    $\ket{\psi'_l}$ is orthogonal to all vectors in $\S$ except $\ket{\psi_l}$ and $l=1,2,3$.
    Therefore, any product vector that appears in the decomposition of
    $\ket{\mu_i}$, say $\ket{\alpha_{i1}}\ket{\beta_{i1}}$, can be written as 
    \begin{equation}\label{eq:chen}
        \ket{\alpha_{i1}}\ket{\beta_{i1}}=c_0\ket{\mu_i}+c_1\ket{\psi'_1}\ket{\psi'_1}+c_2\ket{\psi'_2}\ket{\psi'_2}+c_3\ket{\psi'_3}\ket{\psi'_3},
        \text{ where } c_0 \neq 0. 
    \end{equation} 
    
    The condition of $c_0 \neq 0$ is necessary since
    $\ket{\alpha_{i1}}\ket{\beta_{i1}}$ must be nonorthogonal to $\ket{\mu_i}$
    and $\left\{\ket{\psi'_j}\ket{\psi'_j}\right\}_{j=1}^3$ is orthogonal to the
    subspace $\W$ and hence to all $\ket{\mu_i}$'s. Since the $\ket{\mu_i}$'s
    are entangled, there must be at least two linearly independent product
    vectors $\ket{\alpha_{i1}}\ket{\beta_{i1}}$ and
    $\ket{\alpha_{i2}}\ket{\beta_{i2}}$ in the decomposition of $\ket{\mu_i}$
    for all $i$. In what follows, we show that this is impossible. 

    To have notational convenience, let us, without loss of generality, consider two
    product vectors $\ket{\alpha}\ket{\beta}$ and $\ket{\gamma}\ket{\delta}$ that
    appears in the decomposition of, say, $\ket{\mu_1}$ and write them as 
     \begin{equation}\label{eq:chen3}
         \begin{aligned}
             \ket{\alpha}\ket{\beta}&=a_0\ket{\mu_1}+a_1\ket{\psi'_1}\ket{\psi'_1}+a_2\ket{\psi'_2}\ket{\psi'_2}+a_3\ket{\psi'_3}\ket{\psi'_3},
             \; a_0 \neq 0, \\
            \ket{\gamma}\ket{\delta}&=b_0\ket{\mu_1}+b_1\ket{\psi'_1}\ket{\psi'_1}+b_2\ket{\psi'_2}\ket{\psi'_2}+b_3\ket{\psi'_3}\ket{\psi'_3},
            \; b_0 \neq 0.
         \end{aligned}
     \end{equation}
    Since $\ket{\alpha}\ket{\beta}$ is a product vector, we can expand each of
    $\ket{\alpha}$ and $\ket{\beta}$ in terms of the basis
    $\{\ket{\psi'_i}\}_{i=1}^3$ of $\C^3$, 
    \begin{equation}
        \begin{aligned}
            \ket{\alpha}&=\alpha_1\ket{\psi'_1}+\alpha_2\ket{\psi'_2}+\alpha_3\ket{\psi'_3} \\
            \ket{\beta}&=\beta_1\ket{\psi'_1}+\beta_2\ket{\psi'_2}+\beta_3\ket{\psi'_3}
        \end{aligned}
    \end{equation} 
    where $\alpha_i,\beta_i \in \C \; \forall i$. This gives us
    $\ket{\alpha}\ket{\beta}$ in the $\B'$ basis  
    \begin{multline}
        \ket{\alpha}\ket{\beta}=\alpha_1 \beta_2
        \ket{\psi'_1}\ket{\psi'_2}+\alpha_1 \beta_3
        \ket{\psi'_1}\ket{\psi'_3}+\alpha_2 \beta_1
        \ket{\psi'_2}\ket{\psi'_1}+\alpha_2 \beta_3 \ket{\psi'_2}\ket{\psi'_3}+\\
        \alpha_3 \beta_1 \ket{\psi'_3}\ket{\psi'_1}+\alpha_3 \beta_2
        \ket{\psi'_3}\ket{\psi'_2}+\alpha_1 \beta_1
        \ket{\psi'_1}\ket{\psi'_1}+\alpha_2 \beta_2
        \ket{\psi'_2}\ket{\psi'_2}+\alpha_3 \beta_3 \ket{\psi'_3}\ket{\psi'_3}.
    \end{multline} 
    Inspecting the above equation, we see that for any product vector $\ket{\zeta}
    \in \C^3 \otimes \C^3$ whose components are $\{\zeta_i\}_{i=1}^9$ in the
    ordering of the $\B'$ basis given by~\eqref{eq:bases}, the following conditions
    must be satisfied 

    \begin{equation}\label{eq:chen2}
        \begin{aligned}
             \zeta_1\zeta_9&=\zeta_2\zeta_6 \\
             \zeta_2\zeta_8&=\zeta_1\zeta_4 \\
             \zeta_4\zeta_7&=\zeta_2\zeta_3.
        \end{aligned}
    \end{equation} 
    Let the components of $\ket{\mu_1}$ with respect to $\B'$ be denoted by
    $\{\mu_{1i}\}_{i=1}^9$. We can use Equation~\eqref{eq:own_basis} to write
    $\ket{\psi'_i}\ket{\psi'_i}$ in the basis $\B'$,
    \begin{equation}
         \ket{\psi'_i}\ket{\psi'_i}=
         \begin{pmatrix}
             0 & \dots & 0 & \underbrace{1}_{\text{$(6+i)$-th position}} & 0 & \dots & 0 
         \end{pmatrix}^T.
    \end{equation} 
    This allows us to express $\ket{\alpha}\ket{\beta}$ in $\B'$ basis using
    Equation~\eqref{eq:chen3} as 
    \begin{equation}\label{eq:b_prime}
        \ket{\alpha}\ket{\beta}_{\B'}=
        \begin{pmatrix}
            a_0\mu_{11} & a_0\mu_{12} & a_0\mu_{13} & a_0\mu_{14} & a_0\mu_{15} & a_0\mu_{16} & a_0\mu_{17}+a_1 & a_0\mu_{18}+a_2 & a_0\mu_{19}+a_3
        \end{pmatrix}^T.
    \end{equation}  
Therefore, the conditions of Equation~\eqref{eq:chen2} for the state
$\ket{\alpha}\ket{\beta}$ read 
     \begin{equation}
         \begin{aligned}
             (a_0 \mu_{11})(a_0\mu_{19}+a_3)&=a_0\mu_{12}\, a_0 \mu_{16} \\
             (a_0 \mu_{12})(a_0\mu_{18}+a_2)&=a_0\mu_{11}\, a_0 \mu_{14} \\
             (a_0 \mu_{14})(a_0\mu_{17}+a_1)&=a_0\mu_{12}\, a_0 \mu_{13}
         \end{aligned}
     \end{equation} which can be written as \begin{equation} \label{eq:ratio1}
        a_i = k_i\, a_0, \; i=1,2,3
     \end{equation} where \begin{equation} \label{eq:ratio2}
         k_1 = \frac{\mu_{12}\mu_{13}}{\mu_{14}}-\mu_{17}, \quad k_2 = \frac{\mu_{11}\mu_{14}}{\mu_{12}}-\mu_{18} \; \text{ and }\; k_3 = \frac{\mu_{12}\mu_{16}}{\mu_{11}}-\mu_{19}.
     \end{equation} Similarly, the conditions of Equation~\eqref{eq:chen2} for $\ket{\gamma}\ket{\delta}$ read \begin{equation} \label{eq:ratio3}
        b_i = k_i\, b_0, \; i=1,2,3.
     \end{equation} 
     We will soon show that for $s \in (0,1)$ the components $\mu_{11},
     \mu_{12}$ and $\mu_{14}$ do not equal 0, so that Equation~\eqref{eq:ratio2}
     and hence Equations~\eqref{eq:ratio1} and \eqref{eq:ratio3} are legitimate.
     For now let us consider Equations~\eqref{eq:ratio1} and \eqref{eq:ratio3}.
     These equations imply that the vectors $\ket{\alpha}\ket{\beta}$ and
     $\ket{\gamma}\ket{\delta}$ can be written as 
     \begin{equation}
         \begin{aligned}
             \ket{\alpha}\ket{\beta}&=a_0\left(\ket{\mu_1}+k_1\ket{\psi'_1}\ket{\psi'_1}+k_2\ket{\psi'_2}\ket{\psi'_2}+k_3\ket{\psi'_3}\ket{\psi'_3}\right), \; a_0 \neq 0, \\
     \ket{\gamma}\ket{\delta}&=b_0\left(\ket{\mu_1}+k_1\ket{\psi'_1}\ket{\psi'_1}+k_2\ket{\psi'_2}\ket{\psi'_2}+k_3\ket{\psi'_3}\ket{\psi'_3}\right), \; b_0 \neq 0
         \end{aligned}
     \end{equation} which shows that they are linearly dependent. Therefore, any two product vectors in the span of $\{\ket{\mu_i},\{\ket{\psi'_j}\ket{\psi'_j}\}_{j=1}^3\}$ with nonzero overlap with $\ket{\mu_i}$ are linearly dependent. This implies that the necessary conditions of Lemma \ref{lem:chen} are not met, and as a result the states $\{\ket{\mu_i}\}$ cannot be distinguished by LOCC.

     We now show that we are not dividing by zero in Equation~\eqref{eq:ratio2} in
     the interval $s \in (0,1)$, by equating $\mu_{11}, \mu_{12}$ and $\mu_{14}$
     to zero and solving for $s$. To this end, we first write the components of
     $\ket{\mu_1}$ in the $\B$ basis and then transform them to $\B'$. The
     components of $\ket{\mu_1}$ w.r.t. $\B$ can be obtained by noting that
     $\ket{\mu_1}=\rho^{-1/2}\ket{\psi_1}\ket{\psi_2}$, and consequently
     $\ket{\mu_1}$ is given by the first column of the matrix of $\rho^{-1/2}$
     in $\B$. If $\{\lambda_i\}$ and $\{\ket{\lambda_i}\}$ are the eigenvalues
     and eigenvectors of $\rho$ then $\rho^{-1/2}$ can be computed as
     
     \begin{equation}\label{eq:sigma_inv}
        \rho^{-1/2}=\sum_i \frac{1}{\sqrt{\lambda_i}}\dyad{\lambda_i}, \;
        \lambda_i \neq 0,
    \end{equation} 
    and the eigenvectors corresponding to the zero eigenvalues do not contribute to
    the sum in Equation~\eqref{eq:sigma_inv}. To write down $\rho$ in $\B$, we note that 
    \begin{equation}
        \rho=\frac{1}{6}\sum_{i=1}^6 \dyad{\phi_i}.
    \end{equation} 
    Therefore the matrix of $\rho$ in $\B$ can be found by its action on the basis
    vectors as given by Equation~\eqref{eq:operator_matrix}
    \begin{equation}
        \begin{aligned}
            \rho \ket{\phi_j}&=\left(\frac{1}{6}\sum_{i=1}^6 \dyad{\phi_i}\right)\ket{\phi_j}\\
            &=\frac{1}{6}\sum_{i=1}^6 \braket{\phi_i}{\phi_j}\ket{\phi_i}
        \end{aligned}
    \end{equation} 
    where $j \in [9]$ and $\{\ket{\phi_i}\}_{i=7}^9$ are given by
    $\{\ket{\psi_i}\ket{\psi_i}\}_{i=1}^3$, respectively. This gives us 
    \begin{equation}
        \rho_\B=\frac{1}{6}\left(
        \begin{array}{ccccccccc}
             1 & s & s^2 & s^2 & s^2 & s & s & s & s^2 \\
             s & 1 & s^2 & s & s^2 & s^2 & s & s^2 & s \\
             s^2 & s^2 & 1 & s & s & s^2 & s & s & s^2 \\
             s^2 & s & s & 1 & s^2 & s^2 & s^2 & s & s \\
             s^2 & s^2 & s & s^2 & 1 & s & s & s^2 & s \\
             s & s^2 & s^2 & s^2 & s & 1 & s^2 & s & s \\
             0 & 0 & 0 & 0 & 0 & 0 & 0 & 0 & 0 \\
             0 & 0 & 0 & 0 & 0 & 0 & 0 & 0 & 0 \\
             0 & 0 & 0 & 0 & 0 & 0 & 0 & 0 & 0 \\
        \end{array}
        \right).
    \end{equation}
    From this we can compute $\rho^{-1/2}$ in $\B$ as
    \begin{equation}
        \rho^{-1/2}_\B = \frac{1}{\sqrt{6}}    
        \begin{pmatrix}
            \gamma_0  & \gamma_1  & \gamma_2  & \gamma_3  & \gamma_3  & \gamma_1  & \gamma_4  & \gamma_4  & \gamma_5  \\
            \gamma_1  & \gamma_0  & \gamma_3  & \gamma_1  & \gamma_2  & \gamma_3  & \gamma_4  & \gamma_5  & \gamma_4  \\
            \gamma_2  & \gamma_3  & \gamma_0  & \gamma_1  & \gamma_3  & \gamma_1  & \gamma_4  & \gamma_4  & \gamma_5  \\
            \gamma_3  & \gamma_1  & \gamma_1  & \gamma_0  & \gamma_3  & \gamma_2  & \gamma_5  & \gamma_4  & \gamma_4  \\
            \gamma_3  & \gamma_2  & \gamma_3  & \gamma_3  & \gamma_0  & \gamma_1  & \gamma_4  & \gamma_5  & \gamma_4  \\
            \gamma_1  & \gamma_3  & \gamma_1  & \gamma_2  & \gamma_1  & \gamma_0  & \gamma_5  & \gamma_4  & \gamma_4  \\
            0  & 0  & 0  & 0  & 0  & 0  & 0  & 0  & 0 \\
            0  & 0  & 0  & 0  & 0  & 0  & 0  & 0  & 0 \\
            0  & 0  & 0  & 0  & 0  & 0  & 0  & 0  & 0 \\
        \end{pmatrix}
    \end{equation}
    where 
    \begin{align}
        \gamma_0 = 2v_0 + 2v_1 + v_2 + v_3, \quad 
        \gamma_1 = -v_0 + v_1 - v_2 + v_3, \quad
        \gamma_2 = 2v_0 - 2v_1 - v_2 + v_3, \notag \\
        \gamma_3 = -v_0 - v_1 + v_2 + v_3, \quad
        \gamma_4 = 2v_4 + 2v_5, \quad
        \gamma_5 = -4v_4 + 2v_5
    \end{align}
    and 
    \begin{align}
        v_0 = \frac{1}{\sqrt{(1 - s)}}, \quad 
        v_1 = \frac{1}{\sqrt{-2s^2 + s + 1}}, \quad
        v_2 = \frac{1}{\sqrt{s^2 - 2s + 1}}, \notag \\
        v_3 = \frac{1}{\sqrt{3s^2 + 2s + 1}}, \quad
        v_4 = \frac{s}{\sqrt{1 - s}}, \quad
        v_5 = \frac{s^2 + 2s}{\sqrt{3s^2 + 2s + 1}(9s^2 + 6s + 3)}.
    \end{align}
    Reading off the first column of $\rho^{-1/2}$, we have 
    \begin{equation}\label{eq:mu1_in_B}
        \ket{\mu_1}_\B=\frac{1}{\sqrt{6}} \left(
        \begin{array}{c}
             \frac{1}{\sqrt{3 s^2+2 s+1}}+\frac{2}{\sqrt{-2 s^2+s+1}}+\frac{2}{\sqrt{1-s}}+\frac{1}{1-s} \\
             \frac{1}{\sqrt{3 s^2+2 s+1}}+\frac{1}{\sqrt{-2 s^2+s+1}}-\frac{1}{\sqrt{1-s}}+\frac{1}{s-1} \\
             \frac{1}{\sqrt{3 s^2+2 s+1}}-\frac{2}{\sqrt{-2 s^2+s+1}}+\frac{2}{\sqrt{1-s}}+\frac{1}{s-1} \\
             \frac{1}{\sqrt{3 s^2+2 s+1}}-\frac{1}{\sqrt{-2 s^2+s+1}}-\frac{1}{\sqrt{1-s}}+\frac{1}{1-s} \\
             \frac{1}{\sqrt{3 s^2+2 s+1}}-\frac{1}{\sqrt{-2 s^2+s+1}}-\frac{1}{\sqrt{1-s}}+\frac{1}{1-s} \\
             \frac{1}{\sqrt{3 s^2+2 s+1}}+\frac{1}{\sqrt{-2 s^2+s+1}}-\frac{1}{\sqrt{1-s}}+\frac{1}{s-1} \\
             0 \\
             0 \\
             0 \\
        \end{array}
        \right)
    \end{equation} 
    in the $\B$ basis.
    
    To find $\ket{\mu_1}$ in $\B'$, we need to multiply the matrix of Equation~\eqref{eq:mu1_in_B} by the change-of-basis matrix $M$ whose columns are the
    coefficients of the elements of $\B$ in $\B'$. That is, the $(i,j)$-th element
    of $M$ will be the $i$-th coefficient of $\ket{\phi_j}$ in terms of $\B'$. This
    can be found out by first expressing the states of $\{\ket{\psi_i}\}$ in terms
    of $\{\ket{\psi'_i}\}$. By straightforward calculation we can find the
    relationship between $\B$ and $\B'$, 
    \begin{equation}
        \begin{aligned}
            \ket{\psi'_i} & = \sqrt{\frac{1+s}{1+s-2s^2}}\left(\ket{\psi_i} -
            \frac{s}{1+s} \sum_{\substack{j=1 \\ j \neq i}}^3
            \ket{\psi_j}\right), \\ \ket{\psi_i} & =
            \sqrt{\frac{1+s}{1+s-2s^2}}\left(\ket{\psi'_i} + s
            \sum_{\substack{j=1 \\ j \neq i}}^3 \ket{\psi'_j}\right), \quad
            i=1,2,3.
        \end{aligned}
    \end{equation} 
    We now express the vectors of $\B$ in terms of the basis $\B'$ and construct the
    matrix $M$ 
    \begin{equation}
        M = \frac{1+s}{1+s-2s^2}\,
        \begin{pmatrix}
            1   & s   & s^2 & s^2 & s^2 & s   & s   & s   & s^2 \\
            s   & 1   & s^2 & s   & s^2 & s^2 & s   & s^2 & s   \\
            s^2 & s^2 & 1   & s   & s   & s^2 & s   & s   & s^2 \\
            s^2 & s   & s   & 1   & s^2 & s^2 & s^2 & s   & s   \\
            s^2 & s^2 & s   & s^2 & 1   & s   & s   & s^2 & s   \\
            s   & s^2 & s^2 & s^2 & s   & 1   & s^2 & s   & s   \\
            s   & s   & s   & s^2 & s   & s^2 & 1   & s^2 & s^2 \\
            s   & s^2 & s   & s   & s^2 & s   & s^2 & 1   & s^2 \\
            s^2 & s   & s^2 & s   & s   & s   & s^2 & s^2 & 1  
        \end{pmatrix}.
    \end{equation}
    We can now get $\ket{\mu_1}$ in $\B'$ by multiplying its matrix in $\B$ by $M$.
    This gives us 
    \begin{equation}\label{eq:mu_1_in_B'}
        \ket{\mu_1}_{\B'}= \sqrt{6}
        \begin{pmatrix}
            -\frac{(s+1) \left(2 \sqrt{-2 s^2+s+1}-s+2 \sqrt{1-s}+\sqrt{s (3
            s+2)+1}+1\right)}{6 (s-1) (2 s+1)}\\ \frac{(s+1) \left(-\sqrt{-2
            s^2+s+1}-s+\sqrt{1-s}-\sqrt{s (3 s+2)+1}+1\right)}{6 (s-1) (2
            s+1)}\\ -\frac{(s+1) \left(-2 \sqrt{-2 s^2+s+1}+s+2
            \sqrt{1-s}+\sqrt{s (3 s+2)+1}-1\right)}{6 (s-1) (2 s+1)}\\
            \frac{(s+1) \left(\sqrt{-2 s^2+s+1}+s+\sqrt{1-s}-\sqrt{s (3
            s+2)+1}-1\right)}{6 (s-1) (2 s+1)}\\ \frac{(s+1) \left(\sqrt{-2
            s^2+s+1}+s+\sqrt{1-s}-\sqrt{s (3 s+2)+1}-1\right)}{6 (s-1) (2
            s+1)}\\ \frac{(s+1) \left(-\sqrt{-2 s^2+s+1}-s+\sqrt{1-s}-\sqrt{s (3
            s+2)+1}+1\right)}{6 (s-1) (2 s+1)}\\ \frac{s (s+1) \left(2
            \sqrt{1-s}+s \left(\sqrt{1-s}-\sqrt{s (3 s+2)+1}\right)+\sqrt{s (3
            s+2)+1}\right)}{3 (1-s)^{3/2} (2 s+1) \sqrt{s (3 s+2)+1}}\\ \frac{s
            (s+1) \left(2 \sqrt{1-s}+s \left(\sqrt{1-s}-\sqrt{s (3
            s+2)+1}\right)+\sqrt{s (3 s+2)+1}\right)}{3 (1-s)^{3/2} (2 s+1)
            \sqrt{s (3 s+2)+1}}\\ \frac{s (s+1) \left(s \sqrt{1-s}+2
            \sqrt{1-s}+2 s \sqrt{s (3 s+2)+1}-2 \sqrt{s (3 s+2)+1}\right)}{3
            (1-s)^{3/2} (2 s+1) \sqrt{s (3 s+2)+1}} 
        \end{pmatrix}
    \end{equation} 
    in the basis $\B'$. To find the values of $s$ for which $\mu_{11}, \mu_{12}$
    and $\mu_{14}$ equal zero, we set 
    \begin{subequations}
        \begin{equation} \label{eq:a}
            \mu_{11}=-\frac{(s+1) \left(2 \sqrt{-2 s^2+s+1}-s+2 \sqrt{1-s}+\sqrt{s (3 s+2)+1}+1\right)}{\sqrt{6} (s-1) (2 s+1)}=0
        \end{equation}
        \begin{equation}\label{eq:b}
            \mu_{12}=\frac{(s+1) \left(-\sqrt{-2 s^2+s+1}-s+\sqrt{1-s}-\sqrt{s (3 s+2)+1}+1\right)}{\sqrt{6} (s-1) (2 s+1)}=0
        \end{equation}
        \begin{equation}\label{eq:c}
           \mu_{14}= \frac{(s+1) \left(\sqrt{-2 s^2+s+1}+s+\sqrt{1-s}-\sqrt{s (3 s+2)+1}-1\right)}{\sqrt{6} (s-1) (2 s+1)} =0.
        \end{equation}
    \end{subequations}
    Solving these three equations, we get as solution $s=-1$ for Equation~\eqref{eq:a} and $s=-1,0$ for Equations~\eqref{eq:b} and \eqref{eq:c}. This completes our proof.
\end{proof}

\section{Gram matrix and its square root}\label{gram_matrix}
The Gram matrix $\Gamma$ of the set of states $\T$ is given by 
\begin{equation} 
    \label{gram}
    \Gamma = \left(
    \begin{array}{cccccc}
         1 & s & s^2 & s^2 & s^2 & s \\
         s & 1 & s^2 & s & s^2 & s^2 \\
         s^2 & s^2 & 1 & s & s & s^2 \\
         s^2 & s & s & 1 & s^2 & s^2 \\
         s^2 & s^2 & s & s^2 & 1 & s \\
         s & s^2 & s^2 & s^2 & s & 1 \\
    \end{array}
    \right).
\end{equation}
Its square root is 
\begin{equation}
    \label{gram_sqrt}
    \sqrt{\Gamma} = \frac{1}{6} 
    \begin{pmatrix}
        \gamma_0  & \gamma_1  & \gamma_2  & \gamma_3  & \gamma_3  & \gamma_1  \\
        \gamma_1  & \gamma_0  & \gamma_3  & \gamma_1  & \gamma_2  & \gamma_3  \\
        \gamma_2  & \gamma_3  & \gamma_0  & \gamma_1  & \gamma_1  & \gamma_3  \\
        \gamma_3  & \gamma_1  & \gamma_1  & \gamma_0  & \gamma_3  & \gamma_2  \\
        \gamma_3  & \gamma_2  & \gamma_1  & \gamma_3  & \gamma_0  & \gamma_1  \\
        \gamma_1  & \gamma_3  & \gamma_3  & \gamma_2  & \gamma_1  & \gamma_0  \\
    \end{pmatrix}
\end{equation}
where
\begin{align}
    \gamma_0 = 2 v_0 + 2 v_1 + v_2 + v_3, \quad 
    \gamma_1 = v_1 - v_0 - v_2 + v_3, \notag \\
    \gamma_2 = -2 v_1 + 2 v_0 - v_2 + v_3, \quad
    \gamma_3 = - v_1 - v_0 + v_2 + v_3
\end{align}
and
\begin{equation}
    v_0 = \sqrt{1 - s}, \quad 
    v_1 = \sqrt{-2s^2 + s + 1}, \quad
    v_2 = \sqrt{(s - 1)^2}, \quad
    v_3 = \sqrt{3s^2 + 2s + 1}.
\end{equation} Note that all the diagonal entries of $\sqrt{\Gamma}$ are equal.

\end{document}